\newtheorem{theorem}{Theorem}
\newtheorem{definition}[theorem]{Definition}
\newtheorem{corollary}[theorem]{Corollary}
\newtheorem{lemma}[theorem]{Lemma}
\newcommand\onenorm[1]{\|#1\|_1}
\newcommand\twonorm[1]{\|#1\|_2}
\newcommand\Fnorm[1]{\|#1\|_F}
\newcommand\g[1]{g\!\left(#1\right)}
\newcommand\eps{\varepsilon}
\newcommand\REALS{\mathbb{R}}
\newcommand\rng[1]{\mathrm{range}_{#1}}
\newcommand\ranges{\mathrm{ranges}}
\newcommand\fcal{\mathcal{F}}
\newcommand\wcal{\mathcal{W}}
\newcommand\nnz{\operatorname{\mathtt{nnz}}}
\newcommand\poly{\operatorname{poly}}
\title{On Coresets for Logistic Regression}
\author{
  Alexander~Munteanu\\
  Department of Computer Science\\
  TU Dortmund University\\
  44227 Dortmund, Germany \\
  \texttt{alexander.munteanu@tu-dortmund.de} \\
  \And
  Chris~Schwiegelshohn \\
  Department of Computer Science\\
  Sapienza University of Rome\\
  00185 Rome, Italy \\
  \texttt{schwiegelshohn@diag.uniroma1.it} \\
  \And
  Christian~Sohler \\
  Department of Computer Science\\
  TU Dortmund University\\
  44227 Dortmund, Germany \\
  \texttt{christian.sohler@tu-dortmund.de} \\
  \And
  David P.~Woodruff \\
  Department of Computer Science\\
  Carnegie Mellon University\\
  Pittsburgh, PA 15213, USA \\
  \texttt{dwoodruf@cs.cmu.edu} \\
}
\begin{document}

\maketitle

\begin{abstract}
  Coresets are one of the central methods to facilitate the analysis of large data. We continue a recent line of research applying the theory of coresets to logistic regression. 
  First, we show the negative result that no 
  strongly sublinear sized coresets exist for logistic regression. 
  To deal with intractable worst-case instances 
  we introduce a complexity measure $\mu(X)$, which quantifies the hardness of compressing a data set for logistic regression. 
  $\mu(X)$ has an intuitive statistical interpretation that may be of independent interest.
  For data sets with bounded $\mu(X)$-complexity, we show that a novel sensitivity sampling scheme produces the first provably sublinear $(1\pm\eps)$-coreset.
  We illustrate the performance of our method by comparing to uniform sampling as well as to state of the art methods in the area. The experiments are conducted on real world benchmark data for logistic regression.
\end{abstract}
\allowdisplaybreaks
\section{Introduction}

Scalability is one of the central challenges of modern data analysis and machine learning.
Algorithms with polynomial running time might be regarded as efficient in a conventional sense, but nevertheless become intractable when facing massive data sets.
As a result, performing data reduction techniques in a preprocessing step to speed up a subsequent optimization problem has received considerable attention.
A natural approach is to sub-sample the data according to a certain probability distribution. 
This approach has been successfully applied to a variety of problems including clustering~\citep{LangbergS10,FeL11,BargerF16,BachemLK18}, mixture models \citep{FeldmanFK11,LucicBK16}, low rank approximation~\citep{CMM17}, spectral approximation~\citep{AlM15,LiMP13}, and Nystr\"om methods~\citep{AlM15,MM17}.

The unifying feature of these works is that the probability distribution is based on the sensitivity score of each point. Informally, the sensitivity of a point corresponds to the importance of the point with respect to the objective function we wish to minimize. If the total sensitivity, i.e., the sum of all sensitivity scores $\mathfrak{S}$, is bounded by a reasonably small value ${S}$, there exists a collection of input points known as a coreset with very strong aggregation properties. Given any candidate solution (e.g., a set of $k$ centers for $k$-means, or a hyperplane for linear regression), the objective function computed on the coreset evaluates to the objective function of the original data up to a small multiplicative error. See Sections~\ref{sec:prelim} and~\ref{sec:sensitivity} for formal definitions of sensitivity and coresets.

\textbf{Our Contribution}
We investigate coresets for logistic regression within the sensitivity framework. 
Logistic regression is an instance of a generalized linear model where we are given data $Z\in\REALS^{n\times d}$, and labels $Y\in\{-1,1\}^n$. The optimization task consists of minimizing the negative log-likelihood $\sum\nolimits_{i=1}^{n}\ln(1+\exp(-Y_i Z_i\beta))$ with respect to the parameter $\beta\in\REALS^d$ \citep{McCullaghN89}.

$\bullet$ Our first contribution is an impossibility result: logistic regression has no sublinear streaming algorithm. Due to a standard reduction between coresets and streaming algorithms, this also implies that logistic regression admits no coresets or bounded sensitivity scores in general.

$\bullet$ Our second contribution is an investigation of available sensitivity sampling distributions for logistic regression. For points with large contribution, where $-Y_iZ_i\beta \gg 0$, the objective function increases by a term almost linear in $-Y_iZ_i\beta$. This questions the use of sensitivity scores designed for problems with squared cost functions such as $\ell_2$-regression, $k$-means, and $\ell_2$-based low-rank approximation. Instead, we propose sampling from a mixture distribution with one component proportional to the \emph{square root} of the $\ell^2_2$ leverage scores. Though seemingly similar to the sampling distributions of e.g. \citep{BargerF16,BachemLK18} at first glance, it is important to note that sampling according to $\ell_2^2$ scores is different from sampling according to their square roots. The former is good for $\ell_2$-related loss functions, while the latter preserves $\ell_1$-related functions such as the linear part of the original logistic regression loss function studied here. The other mixture component is uniform sampling to deal with the remaining domain, where the cost function consists of an exponential decay towards zero. Our experiments show that this distribution outperforms uniform and $k$-means based sensitivity sampling by a wide margin on real data sets. The algorithm is space efficient, and can be implemented in a variety of models used to handle large data sets such as $2$-pass streaming, and massively parallel frameworks such as Hadoop and MapReduce, and can be implemented in input sparsity time, $\tilde O(\nnz(Z))$, the number of non-zero entries of the data \citep{ClarksonW13}.

$\bullet$ Our third contribution is an analysis of our sampling distribution for a parametrized class of instances we call $\mu$-complex, placing our work in the framework of \emph{beyond worst-case analysis}~\citep{BalcanMRR15,Roughgarden17}. The parameter $\mu$ roughly corresponds to the ratio between the log of correctly estimated odds and the log of incorrectly estimated odds. The condition of small $\mu$ is justified by the fact that for instances with large $\mu$, logistic regression exhibits methodological problems like imbalance and separability, cf. \citep{MethaP95,HeinzeS02}. We show that the total sensitivity of logistic regression can be bounded in terms of $\mu$, and that our sampling scheme produces the first coreset of provably sublinear size, provided that $\mu$ is small.

\textbf{Related Work}
There is more than a decade of extensive work on sampling based methods relying on the sensitivity framework for $\ell_2$-regression \citep{DrineasMM06,DrineasMM08,LiMP13,CLMMPS15} and $\ell_1$-regression \citep{Clarkson05,SohlerW11,ClarksonDMMMW16}. These were generalized to $\ell_p$-regression for all $p\in [1,\infty)$ \citep{DasguptaDHKM09,WoodruffZ13}. More recent works study sampling methods for $M$-estimators \citep{ClarksonW15,ClarksonW15_focs} and extensions to generalized linear models \citep{HugginsCB16,MolinaMK17}. The contemporary theory behind coresets has been applied to logistic regression, first by \cite{ReddiPS15} using first order gradient methods, and subsequently via sensitivity sampling by \cite{HugginsCB16}.
In the latter work, the authors recovered the result that bounded sensitivity scores for logistic regression imply coresets. Explicit sublinear bounds on the sensitivity scores, as well as an algorithm for computing them, were left as an open question. Instead, they proposed using sensitivity scores derived from any $k$-means clustering for logistic regression. While high sensitivity scores of an input point for $k$-means provably do not imply a high sensitivity score of the same point for logistic regression, the authors observed that they can outperform uniform random sampling on a number of instances with a clustering structure. Recently and independently of our work, \cite{FeldmanT18} gave a coreset construction for logistic regression in a more general framework. Our construction is without regularization and therefore can be also applied for any regularized version of logistic regression, but we have constraints regarding the $\mu$-complexity of the input. Their result is for $\ell^2_2$-regularization, which significantly changes the objective and does not carry over to the unconstrained version. They do not constrain the input but the domain of optimization is bounded. This indicates that both results differ in many important points and are of independent interest.

All proofs and additional plots from the experiments are in the appendices \ref{app:proofs} and \ref{app:experiments}, respectively.

\section{Preliminaries and Problem Setting}
\label{sec:prelim}
In logistic regression we are given a data matrix $Z\in\REALS^{n\times d}$, and labels $Y\in\{-1,1\}^n$. 
Logistic regression has a negative log-likelihood \citep{McCullaghN89}
\begin{align*}
\mathcal{L}(\beta | Z,Y) = \sum\nolimits_{i=1}^{n}\ln(1+\exp(-Y_i Z_i\beta))
\end{align*}
which from a learning and optimization perspective, is the objective function that we would like to minimize over $\beta\in \REALS^d$. 
For brevity we fold for all $i\in[n]$ the labels $Y_i$ as well as the factor $-1$ in the exponent into $X\in\REALS^{n\times d}$ comprising row vectors $x_i=-Y_iZ_i$.
Let $g(z)=\ln(1+\exp(z))$. For technical reasons we deal with a weighted version for weights $w\in\REALS_{> 0}^n$, where each weight satisfies $w_i > 0$. Any positive scaling of the all ones vector $\mathbf{1}=\{1\}^n$ corresponds to the unweighted case. We denote by $D_w$ a diagonal matrix carrying the entries of $w$, i.e., $(D_w)_{ii}=w_i$, so that multiplying $D_w$ to a vector or matrix has the effect of scaling row $i$ by a factor of $w_i$. The objective function becomes
\begin{align*}
f_w(X\beta) &= \sum\nolimits_{i=1}^{n} w_i g(x_i\beta) = \sum\nolimits_{i=1}^{n} w_i \ln(1+\exp(x_i\beta)).
\end{align*}
In this paper we assume we have a very large number of observations in a moderate number of dimensions, that is, $n\gg d$. In order to speed up the computation and to lower memory and storage requirements we would like to significantly reduce the number of observations without losing much information in the original data. A suitable data compression reduces the size to a sublinear number of $o(n)$ data points while the dependence on $d$ and the approximation parameters may be polynomials of low degree. To achieve this, we design a so-called coreset construction for the objective function. A coreset is a possibly (re)weighted and significantly smaller subset of the data that approximates the objective value for any possible query points. More formally, we define coresets for the weighted logistic regression function.
\begin{definition}[$(1\pm\eps)$-coreset for logistic regression]
	\label{def:coreset}
	Let $X\in\REALS^{n\times d}$ be a set of points weighted by $w\in \REALS_{> 0}^n$.
	Then a set $C\in\REALS^{k\times d}$, (re)weighted by $u\in \REALS_{> 0}^k$, is a $(1\pm\eps)$-coreset of $X$ for $f_w$, if $k\ll n$ and
	\[\forall \beta\in \REALS^d: |f_w(X\beta)-f_u(C\beta)|\leq \eps\cdot f_w(X\beta).\]
\end{definition}
\textbf{$\mu$-Complex Data Sets}
\label{sec:assumption}
We will see in Section \ref{sec:LBs} that in general, there is no sublinear one-pass streaming algorithm approximating the objective function up to any finite constant factor. More specifically there exists no sublinear summary or coreset construction that works for all data sets. For the sake of developing coreset constructions that work \emph{reasonably well}, as well as conducting a formal analysis beyond worst-case instances, we introduce a measure $\mu$ that quantifies the \emph{complexity} of compressing a given data set.
\begin{definition}
	\label{assumption}
	Given a data set $X\in\REALS^{n\times d}$ weighted by $w\in \REALS_{> 0}^n$ and a vector $\beta\in \REALS^d$ let $(D_wX\beta)^-$ denote the vector comprising only the negative entries of $D_wX\beta$. Similarly let $(D_wX\beta)^+$ denote the vector of positive entries. We define for $X$ weighted by $w$
	$$\mu_w(X) = \sup\limits_{\beta\in \REALS^d\setminus\{0\}} \frac{\onenorm{(D_wX\beta)^+}}{\onenorm{(D_wX\beta)^-}}.$$
	$X$ weighted by $w$ is called $\mu$-complex if $\mu_w(X)\leq \mu$.
\end{definition}
The size of our $(1\pm\eps)$-coreset constructions for logistic regression for a given $\mu$-complex data set $X$ will have low polynomial dependency on $\mu,d,1/\eps$ but only sublinear dependency on its original size parameter $n$. So for $\mu$-complex data sets having small $\mu(X)\leq \mu$ we have the first $(1\pm\eps)$-coreset of provably sublinear size.
The above definition implies, for $\mu(X)\leq\mu$, the following inequalities. The reader should keep in mind that for all $\beta\in\REALS^d$
\begin{align*}
\mu^{-1}\onenorm{(D_wX\beta)^-}\leq\onenorm{(D_wX\beta)^+} \leq \mu \onenorm{(D_wX\beta)^-}\,.
\end{align*}
We conjecture that computing the value of $\mu(X)$ is hard. However, it can be approximated in polynomial time. It is not necessary to do so in practical applications, but we include this result for those who wish to evaluate whether their data has nice $\mu$-complexity.
\begin{theorem}
	\label{thm:muLP}
	Let $X\in\REALS^{n\times d}$ be weighted by $w\in\REALS^n_{>0}$.
	Then a $\poly(d)$-approximation to the value of $\mu_w(X)$ can be computed in $O(\poly(nd))$ time.
\end{theorem}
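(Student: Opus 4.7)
The plan is to reformulate the definition of $\mu_w(X)$ as a linear program.

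The key algebraic observation is that, writing $y = D_w X\beta$, one has $\onenorm{y^+} = \tfrac{1}{2}(\onenorm{y}+\mathbf{1}^\top y)$ and $\onenorm{y^-} = \tfrac{1}{2}(\onenorm{y}-\mathbf{1}^\top y)$. Setting $s(\beta) := w^\top X\beta = \mathbf{1}^\top D_w X\beta$ and $N(\beta) := \onenorm{D_w X\beta}$, the ratio in Definition~\ref{assumption} equals $(N+s)/(N-s)$, which is a strictly increasing function of $s/N \in [-1,1]$. Since both $s$ and $N$ scale linearly in $\beta$ and $s(-\beta) = -s(\beta)$, computing $\mu_w(X)$ reduces to solving
\begin{equation*}
s^{*} \;=\; \max_{\beta\in\REALS^d}\ w^\top X\beta \quad\text{subject to}\quad \onenorm{D_w X\beta} \leq 1,
\end{equation*}
after which $\mu_w(X) = (1+s^{*})/(1-s^{*})$, understood as $+\infty$ precisely when $s^{*} = 1$.

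Second, this program is a standard linear program: introducing slack variables $z\in\REALS^n$ together with constraints $-z_i \leq w_i (X\beta)_i \leq z_i$ and $\mathbf{1}^\top z \leq 1$ linearises the $\ell_1$ constraint, yielding an LP with $O(n+d)$ variables and $O(n)$ constraints. Standard interior-point methods solve it in time $\poly(nd)$; in fact they produce the \emph{exact} value, which is stronger than the $\poly(d)$-approximation claimed, so it suffices to recover $\mu_w(X)$ via the closed form above.

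The main obstacles I would expect are the corner cases rather than the LP itself. First, I would argue that relaxing $N(\beta) = 1$ to $N(\beta) \leq 1$ is lossless: any interior feasible point with $s>0$ can be scaled upward by a factor strictly greater than $1$ while remaining feasible, strictly increasing the objective, so the optimum lies on the boundary. Second, if $X$ has linearly dependent columns, I would preprocess by restricting the optimization to a basis of the column space of $X$ so that $X\beta=0$ implies $\beta=0$, ensuring $N(\beta) = 0$ does not pollute the feasible region. Third, the degenerate outcome $s^{*}=1$ corresponds exactly to the existence of a nonzero $\beta$ for which all coordinates of $D_w X\beta$ are of a single sign, and is detected by the LP; in that case the algorithm reports $\mu_w(X) = +\infty$, consistent with the definition.
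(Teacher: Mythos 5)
Your proposal is correct, and it takes a genuinely different route from the paper. The paper first computes an $\ell_1$-well-conditioned basis $U$ of $D_wX$ (so that $\onenorm{\beta}\leq\onenorm{U\beta}\leq\poly(d)\onenorm{\beta}$), then solves a single LP for $t=\min\{\onenorm{(U\beta)^-}:\onenorm{\beta}\geq 1\}$ and sandwiches $\mu_w(X)$ between roughly $1/t$ and $\poly(d)/t$; the $\poly(d)$ approximation factor is exactly the distortion of the well-conditioned basis. You instead exploit the identity $\onenorm{y^{\pm}}=\tfrac{1}{2}(\onenorm{y}\pm\mathbf{1}^\top y)$ to rewrite the ratio as a monotone function of $\mathbf{1}^\top D_wX\beta/\onenorm{D_wX\beta}$, whose supremum is a single LP; no well-conditioned basis is needed, and the monotone change of variables is lossless, so you recover $\mu_w(X)$ \emph{exactly} (with the usual caveat that ``exact'' LP solving means recovering the rational optimum of polynomial bit-length, e.g.\ by rounding an interior-point solution to a vertex). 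Your handling of the corner cases is also sound: homogeneity justifies relaxing $\onenorm{D_wX\beta}=1$ to $\leq 1$ because the optimum value is nonnegative by the $\beta\mapsto-\beta$ symmetry, the kernel of $X$ contributes objective value $0$ and so is harmless even without the basis-restriction preprocessing, and $s^*=1$ is precisely the separable case $\mu=\infty$. What each approach buys: the paper's argument is a one-line reduction once the well-conditioned-basis machinery is invoked, but it inherently loses a $\poly(d)$ factor; yours is elementary, self-contained, and strictly stronger. Indeed, note that the paper explicitly conjectures, immediately before the theorem, that computing $\mu(X)$ exactly is hard --- your argument, if it holds up (and I see no gap in it), refutes that conjecture, which is worth stating explicitly rather than leaving implicit in the phrase ``which is stronger than the $\poly(d)$-approximation claimed.''
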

The parameter $\mu(X)$ has an intuitive interpretation and might be of independent interest. 
The odds of a binary random variable $V$ are defined as
$\frac{\mathbb{P}[V=1]}{\mathbb{P}[V=0]}.$
The model assumption of logistic regression is that for every sample $X_i$, the logarithm of the odds is a linear function of $X_i\beta$. For a candidate $\beta$, multiplying all odds and taking the logarithm is then exactly $\|X\beta\|_1$. Our definition now relates the probability mass due to the incorrectly predicted odds and the probability mass due to the correctly predicted odds. We say that the ratio between these two is upper bounded by $\mu$. For logistic regression, assuming they are within some order of magnitude is not uncommon. One extreme is the (degenerate) case where the data set is exactly separable. Choosing $\beta$ to parameterize a separating hyperplane for which $X\beta$ is all positive, implies that $\mu(X)=\infty$. Another case is when we have a large ratio between the number of positively and negatively labeled points which is a lower bound to $\mu$. Under either of these conditions, logistic regression exhibits methodological weaknesses due to the separation or imbalance between the given classes, cf. \citep{MethaP95,HeinzeS02}.

\section{Lower Bounds}
\label{sec:LBs}
At first glance, one might think of taking a uniform sample as a coreset. We demonstrate and discuss on worst-case instances in Appendix \ref{dis:uniform} that this won't work in theory or in practice. In the following we will show a much stronger result, namely that no efficient streaming algorithms or coresets for logistic regression can exist in general, even if we assume that the points lie in $2$-dimensional Euclidean space. To this end we will reduce from the I{\footnotesize NDEX} communication game. In its basic variant, there exist two players Alice and Bob. Alice is given a binary bit string $x\in\{0,1\}^n$ and Bob is given an index $i\in [n]$. The goal is to determine the value of $x_i$ with constant probability while using as little communication as possible. Clearly, the difficulty of the problem is inherently one-way; otherwise Bob could simply send his index to Alice. If the entire communication consists of only a single message sent by Alice to Bob, the message must contain $\Omega(n)$ bits~\citep{KNR99}.
\begin{theorem}\label{LB:streaming}
	Let $Z\in \mathbb{R}^{n\times 2},Y\in\{-1,1\}^n$ be an instance of logistic regression in $2$-dimensional Euclidean space.
	Any one-pass streaming algorithm that approximates the optimal solution of logistic regression up to any finite multiplicative approximation factor requires $\Omega(n/\log n)$ bits of space.
\end{theorem}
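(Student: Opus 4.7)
The plan is a reduction from the one-way \textsc{Index} communication problem: Alice holds $b\in\{0,1\}^N$ and Bob holds $j\in[N]$, and any protocol recovering $b_j$ with constant success probability requires $\Omega(N)$ bits of one-way communication~\citep{KNR99}. I would set $N=\Theta(n/\log n)$ so that $b$ can be encoded as a stream of $\Theta(n)$ labeled points in $\mathbb{R}^2$ with rational coordinates of magnitude $\poly(n)$, i.e., $O(\log n)$ bits per point. Alice passes her encoded stream through the supposed streaming algorithm and forwards its $s$-bit state to Bob; Bob then appends a constant-size probe depending only on $j$ and reads off the algorithm's finite-factor approximation of the optimum $L^\ast$. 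Decoding $b_j$ from this value yields a one-way \textsc{Index} protocol of $O(s)$ bits, forcing $s=\Omega(N)=\Omega(n/\log n)$.

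The engine of the reduction is a \emph{separability gadget}: Alice's encoding together with Bob's probe is linearly separable (i.e., there exists $\beta$ with $Y_iZ_i\beta>0$ for every $i$) iff $b_j=0$, and strictly not separable iff $b_j=1$. In the separable case, taking $\beta$ along the separator and scaling $\|\beta\|\to\infty$ drives every $g(x_i\beta)=\ln(1+\exp(x_i\beta))$ to $0$, so $\inf_\beta \sum_i g(x_i\beta)=0$. In the non-separable case, for every $\beta$ some index has $x_i\beta\ge 0$, and that term is at least $g(0)=\ln 2$, so $\inf_\beta \sum_i g(x_i\beta)\ge \ln 2$. The optimum is therefore either exactly $0$ or at least $\ln 2$; any finite-factor approximation of $L^\ast$ distinguishes these regimes and thus decodes $b_j$.

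The main obstacle is building the separability gadget in just two dimensions, where the signed vectors $\{Y_iZ_i\}$ are separable iff they all lie in a common closed semicircle of the unit circle, so there is essentially one angular parameter available. The approach is to place Alice's $i$-th point on the unit circle at a distinct angle $\theta_i$, with label $Y_i$ encoding $b_i$ by flipping the signed angle from $\theta_i$ to $\theta_i+\pi$ when $b_i=1$; Bob's probe consists of two anchor points at angles $\theta_j\pm(\pi/2-\delta)$ for a small $\delta>0$. These two anchors pin every candidate separating semicircle to a tight window centered at $\theta_j$, so that Alice's $j$-th signed point is inside the semicircle precisely when $b_j=0$ and is ejected when $b_j=1$. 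The delicate step is arranging Alice's other bits so that their contributions to the signed set remain inside the anchored semicircle regardless of $b_i$ and of Bob's choice of $j$; this interference control is where I expect the bulk of the technical effort, and the natural way to buy the needed robustness---dedicating $\Theta(\log n)$ points per bit placed on short safe arcs strictly inside every candidate semicircle---also explains the $\log n$ loss in the final bound. Combined with the $\Omega(N)$ \textsc{Index} lower bound, this yields the claimed $\Omega(n/\log n)$ space lower bound.
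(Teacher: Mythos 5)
Your overall strategy --- a one-way reduction from \textsc{Index}, a separable-versus-non-separable dichotomy, and the observation that the optimum is either $0$ (in the infimum) or at least $g(0)=\ln 2$ so that any finite-factor approximation decodes the bit --- is exactly the paper's strategy, and your cost dichotomy and bit-accounting are fine. The genuine gap is in the separability gadget, and it is not merely ``technical effort'' that is missing: as described, the gadget is geometrically impossible. Your two anchors at $\theta_j\pm(\pi/2-\delta)$ pin every candidate separating semicircle to $(c-\pi/2,\,c+\pi/2)$ with $|c-\theta_j|<\delta$. Bit $i$'s signed point sits at $\theta_i$ when $b_i=0$ and at $\theta_i+\pi$ when $b_i=1$; the first lies in the pinned semicircle iff $|\theta_i-c|<\pi/2$ and the second iff $|\theta_i-c|>\pi/2$, so for every $i$ exactly one value of $b_i$ is compatible with separability. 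Since no open semicircle contains two antipodal directions, there is no placement of $\theta_i$ (and no ``safe arc strictly inside every candidate semicircle'') for which both encodings of $b_i$ stay inside; hence for every $i\neq j$ roughly half the settings of $b_i$ destroy separability independently of $b_j$, and Bob cannot decode. Duplicating each bit $\Theta(\log n)$ times does not help because the obstruction is per-bit and antipodal, not a robustness margin.

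The paper avoids this interference entirely with a different encoding: bit $i$ is represented by the \emph{presence or absence} of the unit-circle point $p_i=(\cos(i/n),\sin(i/n))$ with label $+1$ (a $0$-bit contributes nothing, so it cannot interfere), and Bob's probe is the single point $(1-\delta)p_i$ on the \emph{same ray} with label $-1$. If $x_i=1$, the two same-ray, opposite-label points force at least one misclassification for every $\beta$, so the cost is at least $\ln 2$; if $x_i=0$, the probe lies outside the convex hull of Alice's arc points (for $\delta$ small enough) and the instance is separable, so the cost tends to $0$. If you want to salvage your own construction, you need an encoding in which the ``$b_i$ irrelevant'' configurations occupy a region every candidate separator must contain --- which is precisely what presence/absence buys and what an antipodal label flip cannot.
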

A similar reduction also holds if Alice's message consists of points forming a coreset. Hence, the following corollary holds.
\begin{corollary}\label{LB:coreset}
	Let $Z\in \mathbb{R}^{n\times 2},Y\in\{-1,1\}^n$ be an instance of logistic regression in $2$-dimensional Euclidean space. Any coreset of $Z,Y$ for logistic regression consists of at least $\Omega(n/\log n)$ points.
\end{corollary}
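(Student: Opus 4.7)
My plan is to derive the coreset lower bound from Theorem~\ref{LB:streaming} via the standard mergeability-based simulation argument. Reuse the two-dimensional instance $(Z,Y)$ constructed in the proof of Theorem~\ref{LB:streaming} for the reduction from the $\text{INDEX}$ communication game: Alice encodes her bit string $x\in\{0,1\}^n$ as a planar point set $A$ with labels, Bob encodes his index as one (or few) labeled points $B$, and approximating the logistic regression objective on $A\cup B$ up to \emph{any} finite multiplicative factor determines $x_i$. The task is to show that if $A\cup B$ always admitted a coreset of size $k$, the $\Omega(n)$ INDEX lower bound would force $k=\Omega(n/\log n)$.

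The central tool is that coresets for $f_w$ merge across disjoint unions. Given a $(1\pm\eps)$-coreset $C_A$ of $A$ with weights $u_A$ and a $(1\pm\eps)$-coreset $C_B$ of $B$ with weights $u_B$, the triangle inequality applied to the definition in Definition~\ref{def:coreset} yields, for every $\beta\in\REALS^d$,
\[
\bigl|f_{u_A\cup u_B}((C_A\cup C_B)\beta)-f_{w_A\cup w_B}((A\cup B)\beta)\bigr|\;\le\;\eps\bigl(f_{w_A}(A\beta)+f_{w_B}(B\beta)\bigr)\;=\;\eps\,f_{w_A\cup w_B}((A\cup B)\beta),
\]
using that $g\ge 0$ makes both summands non-negative. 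Hence $C_A\cup C_B$ is itself a $(1\pm\eps)$-coreset of $A\cup B$, and its minimum approximates the true minimum within a factor of $(1+\eps)/(1-\eps)$, which is finite and therefore enough to solve INDEX via Theorem~\ref{LB:streaming}.

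This gives a one-way protocol: Alice computes her coreset $C_A$ (of size at most $k$) and transmits it to Bob, who merges it with his own trivial coreset $C_B=B$ and approximates the optimum of $f$ on $C_A\cup C_B$. Each point of the INDEX instance lives in $\mathbb{R}^2$ with coordinates of bit-complexity $O(\log n)$, and sensitivity-type coreset weights can be taken polynomial in $n$, so each weighted coreset point costs $O(\log n)$ bits to send; Alice's message therefore has length $O(k\log n)$. Combining with the $\Omega(n)$ INDEX lower bound gives $k=\Omega(n/\log n)$. The step I expect to need the most care is the bit-complexity bookkeeping: one must verify that the hard INDEX instance constructed for Theorem~\ref{LB:streaming} indeed uses polynomially bounded precision so that a coreset point is still encodable in $O(\log n)$ bits, and that allowing real-valued reweighting does not tacitly smuggle in unbounded communication—this is handled by standard rounding of the weights without affecting the finite approximation factor needed by the reduction.
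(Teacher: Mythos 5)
Your proposal is correct and follows essentially the same route as the paper's proof: Alice computes a coreset of her point set from the Theorem~\ref{LB:streaming} construction, sends it to Bob, who optimizes over its union with his point, and the $\Omega(n)$ I{\footnotesize NDEX} bound forces the coreset to have $\Omega(n/\log n)$ points. Your explicit mergeability argument and the bit-complexity bookkeeping for the weights are just careful spellings-out of steps the paper leaves implicit.
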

We note that the proof can be slightly modified to rule out any finite additive error as well. This indicates that the notion of \emph{lightweight} coresets with multiplicative and additive error \citep{BachemLK18} is not a sufficient relaxation.
Independently of our work \cite{FeldmanT18} gave a linear lower bound in a more general context based on a worst case instance to the sensitivity approach due to \cite{HugginsCB16}. Our lower bounds and theirs are incomparable; they show that if a coreset can only consist of input points it comprises the entire data set in the worst-case. We show that no coreset with $o(n/\log n)$ can exist, irrespective of whether input points are used. While the distinction may seem minor, a number of coreset constructions in literature necessitate the use of non-input points, see~\citep{AgarwalHV04} and~\citep{FeldmanSS13}.

\section{Sampling via Sensitivity Scores}
\label{sec:sensitivity}
Our sampling based coreset constructions are obtained with the following approach, called sensitivity sampling. Suppose we are given a data set $X\in\REALS^{n\times d}$ together with weights $w\in \REALS_{> 0}^n$ as in Definition \ref{def:coreset}. Recall the function under study is $f_w(X\beta)=\sum\nolimits_{i=1}^n w_i\cdot g(x_i\beta)$. Associate with each point $x_i$ the function $g_i(\beta)=g(x_i\beta)$. Then we have the following definition.
\begin{definition}{\citep{LangbergS10}}
	\label{def:sensitivities}
	Consider a family of functions $\fcal=\{g_1,\ldots,g_n\}$ mapping from $\REALS^d$ to $[0,\infty)$ and weighted by $w\in\REALS_{> 0}^n$. The sensitivity of $g_i$ for $f_w(\beta)=\sum\nolimits_{i=1}^{n} w_i g_i(\beta)$ is
	\begin{align}
	\label{eqn:sensitivities}
	\varsigma_i = \sup \frac{w_i g_i(\beta)}{f_w(\beta)}
	\end{align}
	where the $\sup$ is over all $\beta\in\REALS^d$ with $f_w(\beta) > 0$. If this set is empty then $\varsigma_i=0$. The total sensitivity is $\mathfrak{S} = \sum\nolimits_{i=1}^{n} \varsigma_i$.
\end{definition}
The sensitivity of a point measures its worst-case importance for approximating the objective function on the entire input data set. Performing importance sampling proportional to the sensitivities of the input points thus yields a good approximation. Computing the sensitivities is often intractable and involves solving the original optimization problem to near-optimality, which is the problem we want to solve in the first place, as pointed out in \citep{BravermanFL16}. To get around this, it was shown that any upper bound on the sensitivities $s_i\geq\varsigma_i$ also has provable guarantees. However, the number of samples needed depends on the total sensitivity, that is, the sum of their estimates $S=\sum\nolimits_{i=1}^{n} s_i \geq \sum\nolimits_{i=1}^{n}\varsigma_i = \mathfrak{S}$, so we need to carefully control this quantity. Another complexity measure that plays a crucial role in the sampling complexity is the VC dimension of the range space induced by the set of functions under study.
\begin{definition}
	A range space is a pair $\mathfrak{R}=(\fcal,\ranges)$ where $\fcal$ is a set and $\ranges$ is a family of subsets of $\fcal$. The VC dimension $\Delta(\mathfrak{R})$ of $\mathfrak{R}$ is the size $|G|$ of the largest subset $G\subseteq \fcal$ such that $G$ is shattered by $\ranges$, i.e., $\left| \{G\cap R\mid R\in \ranges \} \right| = 2^{|G|}.$
\end{definition}
\begin{definition}
	Let $\fcal$ be a finite set of functions mapping from $\REALS^d$ to $\REALS_{\geq 0}$. For every $\beta\in\REALS^d$ and $r\in \REALS_{\geq 0}$, let $\rng{\fcal}(\beta,r) = \{ f\in \fcal\mid f(\beta)\geq r\}$, and $\ranges(\fcal)=\{\rng{\fcal}(\beta,r)\mid \beta\in\REALS^d, r\in \REALS_{\geq 0} \}$, and $\mathfrak{R}_{\fcal}=(\fcal,\ranges(\fcal))$ be the range space induced by $\fcal$.
\end{definition}
Recently a framework combining the sensitivity scores with a theory on the VC dimension of range spaces was developed in \citep{BravermanFL16}. For technical reasons we use a slightly modified version.
\begin{theorem}
	\label{thm:sensitivity}
	Consider a family of functions $\fcal=\{f_1,\ldots,f_n\}$ mapping from $\REALS^d$ to $[0,\infty)$ and a vector of weights $w\in\REALS_{> 0}^n$. Let $\eps,\delta\in(0,1/2)$. Let $s_i\geq \varsigma_i$. Let $S=\sum\nolimits_{i=1}^{n} s_i \geq \mathfrak{S}$. Given $s_i$ one can compute in time $O(|\fcal|)$ a set $R\subset \fcal$ of $$O\left( \frac{S}{\eps^2}\left( \Delta \log S + \log \left(\frac{1}{\delta}\right) \right) \right)$$ weighted functions such that with probability $1-\delta$ we have for all $\beta\in \REALS^d$ simultaneously $$\left| \sum_{f\in \fcal} w_i f_i(\beta) - \sum_{f\in R} u_i f_i(\beta) \right| \leq \eps \sum_{f\in \fcal} w_i f_i(\beta).$$
	where each element of $R$ is sampled i.i.d. with probability $p_j=\frac{s_j}{S}$ from $\fcal$, $u_i = \frac{Sw_j}{s_j|R|}$ denotes the weight of a function $f_i\in R$ that corresponds to $f_j\in\fcal$, and where $\Delta$ is an upper bound on the VC dimension of the range space $\mathfrak{R}_{\fcal^*}$ induced by $\fcal^*$ that can be obtained by defining $\fcal^*$ to be the set of functions $f_j\in\fcal$ where each function is scaled by $\frac{Sw_j}{s_j|R|}$.
\end{theorem}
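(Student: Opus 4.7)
The plan is to follow the sensitivity-sampling template of Langberg--Schulman and its VC-refined variant of Braverman--Feldman--Lang, adapting the bookkeeping to the rescaled family $\fcal^*$ that appears in the statement. The first step is a normalization that makes every individual summand uniformly bounded by the quantity we want to estimate. For every $\beta$ with $f_w(\beta)>0$, define $\tilde f_j(\beta) = w_j f_j(\beta)/(s_j f_w(\beta))$. By the sensitivity upper bound $w_j f_j(\beta)\le s_j f_w(\beta)$ we have $\tilde f_j(\beta)\in[0,1]$ and $\sum_j s_j \tilde f_j(\beta) = 1$. With the prescribed sampling probabilities $p_j = s_j/S$, a single draw $J\sim p$ has $\Ex{\tilde f_J(\beta)} = 1/S$; rescaling the empirical mean by $S f_w(\beta)$ turns an additive $\eps/S$ approximation of $1/S$ into a relative $\eps f_w(\beta)$ approximation of $f_w(\beta)$. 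The reweighting factor $u_j = Sw_j/(s_j|R|)$ in the statement is exactly the one that implements this rescaling inside the sum.

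For a single fixed $\beta$, the $m=|R|$ iid samples are bounded in $[0,1]$ with mean $1/S$, so a Bernstein/Chernoff bound yields the stated approximation using $m = \Omega(S\eps^{-2}\log(1/\delta'))$ samples at failure probability $\delta'$. To upgrade this to a uniform ``for all $\beta$'' guarantee, I would invoke the VC-based $\eps$-sample theorem applied to the range space $\mathfrak{R}_{\fcal^*}$ of the rescaled functions $f_j^* = \frac{Sw_j}{s_j m} f_j$, whose VC dimension is $\Delta$ by hypothesis. A sample of the announced size guarantees that for every range $\rng{\fcal^*}(\beta,r)$ the empirical probability mass approximates the true mass. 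A standard level-set decomposition---discretizing the threshold $r$ on a geometric grid from $1$ down to $1/\poly(S)$ and using the uniform bound $\tilde f_j\le 1$ to absorb the tails---converts this uniform range-measure approximation into the desired uniform approximation of $\sum_j w_j f_j(\beta)$, with only logarithmic overhead that is already absorbed into the $\log S$ factor in the sample-size bound.

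Combining the ``$S$'' scale from the normalization step with the VC-based sample complexity produces the announced bound $O(S\eps^{-2}(\Delta\log S + \log(1/\delta)))$, and the $O(|\fcal|)$ running time follows because drawing $R$ only requires normalizing the $s_j$ and invoking an alias-table sampler. The main obstacle, and the reason the theorem must be reproved rather than cited verbatim, is that the VC dimension in the statement is the one of the \emph{rescaled} family $\fcal^*$ rather than of the original $\fcal$: the per-function weights $Sw_j/(s_j m)$ enter the definition of the level sets, so I will have to verify that the range-space structure is preserved under this multiplicative reweighting and that the $\eps$-sample theorem indeed applies with parameter $\Delta$ as quoted. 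The rest of the argument is a direct translation of the sensitivity sampling framework to the logistic-regression setting.
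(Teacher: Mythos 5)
First, a point of comparison: the paper does not actually prove Theorem \ref{thm:sensitivity}. It imports it as a ``slightly modified version'' of the framework of \citep{BravermanFL16} (building on \citep{LangbergS10,FeL11}), the modification being precisely that the VC dimension is measured on the rescaled family $\fcal^*$. So there is no in-paper proof to match your argument against; what can be judged is whether your sketch would reconstruct the cited result. Your normalization step is correct and is the standard one: with $\tilde f_j(\beta)=w_jf_j(\beta)/(s_jf_w(\beta))\in[0,1]$ one indeed gets $\sum_j s_j\tilde f_j(\beta)=1$ and $\Ex{\tilde f_J(\beta)}=1/S$ for $J\sim p$, and the weights $u_j=Sw_j/(s_j|R|)$ make the estimator unbiased. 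You are also right that the content of the ``slight modification'' is verifying that the VC bound applies to the reweighted ranges; the paper handles this downstream in Lemma \ref{lem:vc2} by controlling the number of distinct weights.

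The genuine gap is in the uniform-convergence step. You propose to invoke ``the VC-based $\eps$-sample theorem'' for $\mathfrak{R}_{\fcal^*}$ and then integrate over level sets, claiming the discretization overhead is absorbed into the $\log S$ factor. But the quantity to be estimated, $\Ex{\tilde f_J(\beta)}=1/S$, must be approximated to \emph{additive} accuracy $\eps/S$, and a plain additive $\eps$-approximation theorem at that accuracy requires $m=\Theta(\Delta S^2/\eps^2)$ samples --- a factor of $S$ worse than the bound in the statement (this is exactly the older Langberg--Schulman bound). The missing factor of $S$ is not recovered by a geometric grid over thresholds; it is recovered by using the \emph{relative} $(\nu,\eps)$-approximation theorem for bounded-VC range spaces (Li, Long, and Srinivasan) with $\nu=1/S$: ranges of measure at least $\nu$ are approximated to relative error $\eps$, smaller ranges to additive error $\eps\nu$, with sample size $O\left(\eps^{-2}\nu^{-1}\left(\Delta\log(1/\nu)+\log(1/\delta)\right)\right)$. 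Integrating that guarantee over the level sets of $\tilde f_j(\beta)$ and using $\int_0^1\Pr[\tilde f_J(\beta)\ge r]\,dr=1/S$ gives additive error $O(\eps/S)$ at sample size $O(S\eps^{-2}(\Delta\log S+\log(1/\delta)))$. Your pointwise Bernstein computation shows you see why a single fixed $\beta$ needs only $S/\eps^2$ samples, but promoting this to all $\beta$ simultaneously requires the variance-sensitive (relative) uniform-convergence bound, not the additive one; as written, your argument only establishes the theorem with $S$ replaced by $S^2$ in the sample size.
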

Now we show that the VC dimension of the range space induced by the set of functions studied in logistic regression can be related to the VC dimension of the set of linear classifiers. We first start with a fixed common weight and generalize the result to a more general finite set of distinct weights.
\begin{lemma}
	\label{lem:vc1}
	Let $X\in\REALS^{n\times d}, c\in \REALS_{>0}$.
	The range space induced by $\fcal^c_{log} = \{c \cdot g(x_i \beta)\,|\, i\in [n]\}$ satisfies $\Delta(\mathfrak{R}_{\fcal^c_{log}}) \leq d + 1$.
\end{lemma}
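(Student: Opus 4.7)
The plan is to exploit the fact that $g(z)=\ln(1+\exp(z))$ is a strictly increasing function in order to reduce the range space of $\fcal^c_{log}$ to the classical range space of affine halfspaces in $\REALS^d$. I would identify each function $c\cdot g(x_i\beta)\in \fcal^c_{log}$ with its underlying row vector $x_i\in\REALS^d$; under this identification a range $\rng{\fcal^c_{log}}(\beta,r) = \{i : c\cdot g(x_i\beta)\geq r\}$ becomes a subset of $\{x_1,\ldots,x_n\}$.

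Next I would unwind the threshold condition. Since $c>0$ and $g$ is strictly increasing with image $(0,\infty)$, the condition $c\cdot g(x_i\beta)\geq r$ is, for $r\leq 0$, satisfied by every index $i$, and for $r>0$ equivalent to $x_i\beta \geq g^{-1}(r/c)$ whenever $r/c>0$ lies in the range of $g$, and to the empty set otherwise. In all three cases the resulting subset of $\{x_1,\dots,x_n\}$ has the form $\{i : x_i\beta\geq\tau\}$ for some $\tau\in\REALS\cup\{\pm\infty\}$, i.e. it is the intersection with the data of a closed affine halfspace of $\REALS^d$ with normal direction $\beta$.

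Therefore $\ranges(\fcal^c_{log})$ is, under the bijection $g_i\leftrightarrow x_i$, a subfamily of the halfspace ranges $\{\{x\in\REALS^d : x\beta\geq\tau\} : \beta\in\REALS^d,\tau\in\REALS\}$. I would then invoke the standard fact (via Radon's theorem) that the VC dimension of affine halfspaces in $\REALS^d$ is exactly $d+1$. Since passing to a sub-family of ranges can only decrease the VC dimension, we conclude $\Delta(\mathfrak{R}_{\fcal^c_{log}})\leq d+1$, as required.

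I do not anticipate a genuine obstacle; the only care needed is to handle the boundary cases $r\leq 0$ and values of $r/c$ outside the image of $g$ so that the reduction to halfspace ranges is complete and not merely generic. Once that is done, the conclusion follows immediately from the classical VC bound on halfspaces.
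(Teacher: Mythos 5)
Your proof is correct and follows essentially the same route as the paper's: both use the monotonicity and invertibility of $g$ to rewrite each range $\{i : c\cdot g(x_i\beta)\geq r\}$ as a halfspace range $\{i : x_i\beta\geq \tau\}$ and then invoke the classical $d+1$ bound on the VC dimension of affine halfspaces. Your extra care with the boundary cases $r\leq 0$ and thresholds outside the image of $g$ is a welcome tightening of a point the paper glosses over (it asserts $g^{-1}$ is defined on all of $\REALS_{\geq 0}$ rather than $\REALS_{>0}$), but it does not change the argument.
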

\begin{lemma}
	\label{lem:vc2}
	Let $X\in\REALS^{n\times d}$ be weighted by $w\in\REALS^{n}$ where $w_i \in \{v_1,\ldots,v_t\}$ for all $i\in [n]$.
	The range space induced by $\fcal_{log} = \{w_i  \cdot g(x_i \beta) \mid i\in [n]\}$ satisfies $\Delta(\mathfrak{R}_{\fcal_{log}}) \leq t\cdot (d + 1)$.
\end{lemma}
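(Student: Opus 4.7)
The plan is to reduce to Lemma~\ref{lem:vc1} by partitioning $\fcal_{log}$ according to the weight value. Let $\fcal^{v_j}_{log} = \{v_j \cdot g(x_i\beta) \mid i\in[n], w_i = v_j\}$ for $j = 1,\ldots,t$, so that $\fcal_{log} = \bigsqcup_{j=1}^t \fcal^{v_j}_{log}$. By Lemma~\ref{lem:vc1} applied with $c = v_j$, each range space $\mathfrak{R}_{\fcal^{v_j}_{log}}$ has VC dimension at most $d+1$.

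Now suppose $G \subseteq \fcal_{log}$ is shattered by $\ranges(\fcal_{log})$. Partition $G$ into $G_j = G \cap \fcal^{v_j}_{log}$. I would show that each $G_j$ is shattered in its own range space $\mathfrak{R}_{\fcal^{v_j}_{log}}$. Indeed, fix $j$ and any $T_j \subseteq G_j$. Since $G$ is shattered, there exist $\beta\in\REALS^d$ and $r\in\REALS_{\geq 0}$ such that $\rng{\fcal_{log}}(\beta,r) \cap G = T_j \cup (G \setminus G_j)$, i.e. the induced range picks out exactly $T_j$ from $G_j$ (and everything outside $G_j$ in $G$, which is irrelevant for the restriction). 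The key observation is that
\[
\rng{\fcal_{log}}(\beta,r)\cap \fcal^{v_j}_{log} = \{v_j \cdot g(x_i\beta) \geq r\} = \{g(x_i\beta) \geq r/v_j\} = \rng{\fcal^{v_j}_{log}}(\beta, r/v_j),
\]
which is a valid range in $\ranges(\fcal^{v_j}_{log})$ and has intersection $T_j$ with $G_j$. Hence $G_j$ is shattered by $\ranges(\fcal^{v_j}_{log})$, and Lemma~\ref{lem:vc1} gives $|G_j| \leq d+1$.

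Summing over the $t$ classes yields $|G| = \sum_{j=1}^t |G_j| \leq t(d+1)$, which bounds the VC dimension as claimed.

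The argument is essentially bookkeeping and I do not anticipate a real obstacle. The only subtlety worth stating carefully is the restriction step: one must verify that when a range defined for the full family $\fcal_{log}$ is intersected with a single weight class $\fcal^{v_j}_{log}$, the resulting subset is itself expressible as a range in $\ranges(\fcal^{v_j}_{log})$, which holds because the common scaling $v_j$ can be absorbed into the threshold $r$.
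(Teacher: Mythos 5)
Your proof is correct and follows essentially the same route as the paper's: partition $\fcal_{log}$ by weight value, observe that the restriction of a shattered set to each weight class is shattered by the ranges of that class (since the common scaling can be absorbed into the threshold), and apply Lemma~\ref{lem:vc1} to each class. The only cosmetic difference is that you bound each $|G_j|\leq d+1$ directly and sum, whereas the paper argues by contradiction via pigeonhole; the content is identical, and your explicit verification of the restriction step is a welcome bit of extra care.
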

We will see later how to bound the number of distinct weights $t$ by a logarithmic term in the range of the involved weights. It remains for us to derive tight and efficiently computable upper bounds on the sensitivities.

\textbf{Base Algorithm}
We show that sampling proportional to the \emph{square root} of the $\ell_2$-leverage scores augmented by $w_i/\sum_{j\in [n]} w_j$ yields a coreset whose size is roughly linear in $\mu$ and the dependence on the input size is roughly $\sqrt{n}$. In what follows, let $\wcal=\sum_{i\in [n]} w_i$.

We make a case distinction covered by lemmas \ref{lem:one} and \ref{lem:two}. The intuition in the first case is that for a sufficiently large positive entry $z$, we have that $|z|\leq g(z) \leq 2 |z|$. The lower bound holds even for all non-negative entries. Moreover, for $\mu$-complex inputs we are able to relate the $\ell_1$ norm of all entries to the positive ones, which will yield the desired bound, arguing similarly to the techniques of \cite{ClarksonW15_focs} though adapted here for logistic regression.
\begin{lemma}
	\label{lem:one}
	Let $X\in\REALS^{n\times d}$ weighted by $w\in\REALS_{> 0}^n$ be $\mu$-complex. Let $U$ be an orthonormal basis for the columnspace of $D_wX$. If for index $i$, the supreme $\beta$ in (\ref{eqn:sensitivities}) satisfies $0.5 \leq x_i \beta$ then $w_i g(x_i\beta) \leq 2(1+\mu)\twonorm{U_i}f_w(X\beta)$.
\end{lemma}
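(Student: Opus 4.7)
The plan is to establish the desired bound $w_i\,g(x_i\beta) \le 2(1+\mu)\twonorm{U_i}\,f_w(X\beta)$ by chaining four short inequalities, each of which introduces exactly one ingredient on the right-hand side: the factor of $2$ from linearizing $g$, the leverage score $\twonorm{U_i}$, the factor $(1+\mu)$ coming from $\mu$-complexity, and finally the objective $f_w(X\beta)$.

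First I would linearize the log-loss using the case-hypothesis $x_i\beta\ge 0.5$. In that regime one verifies $g(z)\le 2z$ (the critical point where $\ln(1+e^z)$ crosses $2z$ from above sits below $0.5$), so $w_i\,g(x_i\beta)\le 2\,w_i\,x_i\beta = 2(D_wX\beta)_i$, a nonnegative scalar. Second, I would bound this single coordinate by an $\ell_1$-norm via the standard leverage-score argument: since $D_wX\beta$ lies in the column space of $D_wX$, writing $D_wX = UB$ for some matrix $B$ and applying Cauchy--Schwarz yields $|(D_wX\beta)_i|\le \twonorm{U_i}\twonorm{D_wX\beta}$, after which the trivial inequality $\twonorm{v}\le\onenorm{v}$ converts this to $\twonorm{U_i}\onenorm{D_wX\beta}$. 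Third, I would invoke the $\mu$-complexity inequality displayed just after Definition~\ref{assumption}, namely $\onenorm{(D_wX\beta)^-}\le \mu\onenorm{(D_wX\beta)^+}$, to split and bound $\onenorm{D_wX\beta}\le (1+\mu)\onenorm{(D_wX\beta)^+}$. Fourth, the pointwise inequality $g(z)\ge \max(0,z)$ (which follows from $\ln(1+e^z)\ge \ln e^z = z$ and $\ln(1+e^z)\ge \ln 1 = 0$) immediately yields $\onenorm{(D_wX\beta)^+} = \sum_j w_j\max(0,x_j\beta)\le \sum_j w_j\,g(x_j\beta) = f_w(X\beta)$. Composing the four inequalities delivers the claim.

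The only step that requires a genuine quantitative check is the linearization constant in step one: one must verify $\ln(1+e^z)\le 2z$ uniformly on $[0.5,\infty)$, which is precisely what pins the threshold at $0.5$ and forces Lemma~\ref{lem:two} to cover the complementary regime $x_i\beta<0.5$. The remaining three steps are purely structural: a leverage-score Cauchy--Schwarz bound, a direct appeal to the already-stated $\mu$-complexity inequality, and the elementary pointwise lower bound $g(z)\ge \max(0,z)$ on the log-loss. I do not anticipate any serious obstacle; the substance of the lemma is the clever combination of the $\ell_2$ leverage score (which gives the coordinate bound) with the $\ell_1$-style $\mu$-complexity (which converts the global norm of $D_wX\beta$ into a quantity controlled by $f_w(X\beta)$).
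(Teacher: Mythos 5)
Your proof is correct and takes essentially the same route as the paper's: the only cosmetic difference is that you linearize $g$ first ($g(z)\le 2z$ on $[0.5,\infty)$) and then apply the Cauchy--Schwarz/leverage-score bound, whereas the paper pushes the Cauchy--Schwarz bound through the argument of $g$ using monotonicity and linearizes afterwards. All four ingredients --- the linearization constant, the bound $|(D_wX\beta)_i|\le\twonorm{U_i}\onenorm{D_wX\beta}$, the $\mu$-complexity inequality $\onenorm{D_wX\beta}\le(1+\mu)\onenorm{(D_wX\beta)^+}$, and the pointwise bound $g(z)\ge\max(0,z)$ --- coincide with the paper's.
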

In the second case, the element under study is bounded by a constant. We consider two sub cases. If there are a lot of contributions, which are not too small, and thus cost at least a constant each, then we can lower bound the total cost by a constant times their total weight. If on the other hand there are many very small negative values, then this implies again that the cost is within a $\mu$ fraction of the total weight.
\begin{lemma}
	\label{lem:two}
	Let $X\in\REALS^{n\times d}$ weighted by $w\in\REALS_{> 0}^n$ be $\mu$-complex. If for index $i$, the supreme $\beta$ in (\ref{eqn:sensitivities}) satisfies $0.5 \geq x_i\beta$ then $w_i g(x_i\beta) \leq \frac{(20+\mu)w_i}{\wcal} f_w(X\beta)$.
\end{lemma}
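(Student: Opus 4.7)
Since $g$ is monotone increasing, the hypothesis $x_i\beta \le 0.5$ immediately gives $g(x_i\beta)\le g(0.5)=\ln(1+e^{1/2}) < 1$, so $w_i g(x_i\beta)\le w_i$. The whole task therefore reduces to lower bounding $f_w(X\beta)$ by a quantity of order $\frac{\wcal}{20+\mu}$. Following the hint in the statement, I will split the indices around a fixed threshold $c$ (I expect $c=2$ to give the stated constant $20$; the argument works for any $c$ with only the constants changing). Let
\[
A=\{j:x_j\beta\ge -c\}, \qquad B=\{j:x_j\beta<-c\},
\]
and distinguish whether $\sum_{j\in A}w_j$ or $\sum_{j\in B}w_j$ carries at least half of $\wcal$.

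\paragraph{Case 1: $\sum_{j\in A} w_j\ge \wcal/2$.} For each $j\in A$ monotonicity of $g$ gives $g(x_j\beta)\ge g(-c)=\ln(1+e^{-c})$, which is a positive constant. Summing only over $A$,
\[
f_w(X\beta)\ge \ln(1+e^{-c})\sum_{j\in A}w_j\ge \tfrac{1}{2}\ln(1+e^{-c})\,\wcal .
\]
Combined with $g(x_i\beta)\le g(0.5)<1$, this yields $w_i g(x_i\beta)\le \frac{2}{\ln(1+e^{-c})}\cdot \frac{w_i}{\wcal}f_w(X\beta)$, i.e.\ an absolute-constant multiple of $\frac{w_i}{\wcal}f_w(X\beta)$.

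\paragraph{Case 2: $\sum_{j\in B}w_j>\wcal/2$.} For $j\in B$ we have $w_j(-x_j\beta)>cw_j$, hence
\[
\onenorm{(D_wX\beta)^-}\ge \sum_{j\in B}w_j(-x_j\beta)\ge c\sum_{j\in B}w_j\ge \tfrac{c}{2}\wcal .
\]
By $\mu$-complexity (the second inequality displayed right after Definition~\ref{assumption}), $\onenorm{(D_wX\beta)^+}\ge \mu^{-1}\onenorm{(D_wX\beta)^-}\ge \frac{c\wcal}{2\mu}$. Since $g(z)\ge z$ for $z\ge 0$,
\[
f_w(X\beta)\;\ge\; \sum_{j:\,x_j\beta>0} w_j\,g(x_j\beta)\;\ge\; \sum_{j:\,x_j\beta>0} w_j x_j\beta\;=\;\onenorm{(D_wX\beta)^+}\;\ge\; \frac{c\wcal}{2\mu}.
\]
Therefore $w_i g(x_i\beta)\le w_i\le \frac{2\mu}{c}\cdot\frac{w_i}{\wcal}f_w(X\beta)$, a $\mu$-dependent bound of the desired form.

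\paragraph{Combining and tuning $c$.} The final bound is $w_i g(x_i\beta)\le \max\!\bigl(\tfrac{2g(0.5)}{\ln(1+e^{-c})},\;\tfrac{2g(0.5)\mu}{c}\bigr)\cdot \frac{w_i}{\wcal}f_w(X\beta)$, which is at most $(\alpha(c)+\gamma(c)\mu)\cdot \frac{w_i}{\wcal}f_w(X\beta)$. A short numerical check with $c=2$ gives $\alpha(c)\le 16$ and $\gamma(c)\le 1$, which fits inside the claimed constant $20+\mu$; any slightly larger $c$ keeps the $\mu$-coefficient at $1$ and only affects the additive constant. I do not anticipate a conceptual obstacle—both cases are short once the threshold split is set up; the only care needed is in Case~2 to extract the factor $\mu^{-1}$ from the correct direction of the $\mu$-complexity inequality and to use the simple but crucial bound $g(z)\ge z$ on the positive side to connect $f_w(X\beta)$ to $\onenorm{(D_wX\beta)^+}$.
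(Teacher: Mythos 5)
Your proof is correct and follows essentially the same route as the paper's: the same split at threshold $-2$ into $K^+=\{j: x_j\beta>-2\}$ and $K^-=\{j: x_j\beta\le -2\}$, the same use of $g(-2)>1/10$ in the first case, and the same chain $f_w(X\beta)\ge\onenorm{(D_wX\beta)^+}\ge\mu^{-1}\onenorm{(D_wX\beta)^-}\ge\wcal/\mu$ in the second. The constants you obtain ($16$ and $\mu$) fit within the claimed $20+\mu$ exactly as in the paper.
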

Combining both lemmas yields general upper bounds on the sensitivities that we can use as an importance sampling distribution. We also derive an upper bound on the total sensitivity that will be used to bound the sampling complexity.
\begin{lemma}
	\label{lem:totalsensitivity}
	Let $X\in\REALS^{n\times d}$ weighted by $w\in\REALS_{> 0}^n$ be $\mu$-complex. Let $U$ be an orthonormal basis for the columnspace of $D_wX$. For each $i\in [n]$, the sensitivity of $g_i(\beta)=g(x_i\beta)$ for the weighted logistic regression function is bounded by $\varsigma_i \leq s_i = (20+2\mu)\cdot(\twonorm{U_i} + w_i/\wcal)$.
	The total sensitivity is bounded by $\mathfrak{S} \leq S \leq 44\mu \sqrt{nd}$. 
\end{lemma}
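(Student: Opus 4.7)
The plan is to combine the two case-lemmas, Lemmas~\ref{lem:one} and~\ref{lem:two}, to obtain the per-point sensitivity estimate, and then sum up together with standard leverage-score facts for the total sensitivity.

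First I would fix an index $i$ and any $\beta \in \REALS^d$ with $f_w(X\beta) > 0$. Since either $x_i\beta \geq 0.5$ or $x_i\beta \leq 0.5$, exactly one of Lemmas~\ref{lem:one} and~\ref{lem:two} applies, giving $w_i g(x_i\beta) \leq 2(1+\mu)\twonorm{U_i} f_w(X\beta)$ in the former case and $w_i g(x_i\beta) \leq (20+\mu)(w_i/\wcal) f_w(X\beta)$ in the latter. Both bounds are dominated by the symmetric bound $(20+2\mu)(\twonorm{U_i} + w_i/\wcal) f_w(X\beta)$, using only $2(1+\mu) \leq 20+2\mu$ and $20+\mu \leq 20+2\mu$. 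Dividing by $f_w(X\beta)$ and taking the supremum over $\beta$ yields $\varsigma_i \leq s_i$ as claimed.

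Second, I would sum the $s_i$ and split: $S = (20+2\mu)\sum_i \twonorm{U_i} + (20+2\mu)\sum_i w_i/\wcal$. The second sum equals $1$ by the definition of $\wcal$. For the first, since $U$ is an orthonormal basis for the columnspace of $D_w X$---a subspace of dimension at most $d$---we have $\sum_i \twonorm{U_i}^2 = \Fnorm{U}^2 \leq d$, and Cauchy-Schwarz then gives $\sum_i \twonorm{U_i} \leq \sqrt{nd}$. Substituting, $S \leq (20+2\mu)(\sqrt{nd}+1) \leq 2(20+2\mu)\sqrt{nd} = (40+4\mu)\sqrt{nd}$, where I use $\sqrt{nd}\geq 1$. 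Invoking $\mu\geq 1$---a consequence of the fact that the ratio defining $\mu_w(X)$ inverts under $\beta \to -\beta$, so its supremum cannot lie below $1$---the final expression is bounded by $44\mu\sqrt{nd}$.

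The main obstacle is essentially just constant bookkeeping; all substantive work is done inside Lemmas~\ref{lem:one} and~\ref{lem:two}. The only subtleties I would double-check are (i) that $\mu \geq 1$ may be freely invoked so the additive $40$ can be absorbed into $44\mu$, and (ii) that the column count of $U$ (and hence the bound $\Fnorm{U}^2 \leq d$) is at most $d$ regardless of whether $D_w X$ has full column rank.
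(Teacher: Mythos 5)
Your proposal is correct and follows essentially the same route as the paper: combine Lemmas~\ref{lem:one} and~\ref{lem:two} into the single bound $(20+2\mu)(\twonorm{U_i}+w_i/\wcal)$, then sum using $\Fnorm{U}\le\sqrt{d}$ and Cauchy--Schwarz, absorbing the additive constant via $\mu\ge 1$ (which the paper uses implicitly in the step $20+2\mu\le 22\mu$ and which you correctly justify by the $\beta\mapsto-\beta$ symmetry). The only cosmetic difference is that you bound the supremum by the pointwise maximum of the two case bounds while the paper bounds it by their sum; both yield the stated $s_i$.
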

We combine the above results into the following theorem.
\begin{theorem}
	\label{thm:basicalg}
	Let $X\in\REALS^{n\times d}$ weighted by $w\in\REALS^n$ be $\mu$-complex. 
	Let $\omega = \frac{w_{\max}}{w_{\min}}$ be the ratio between the maximum and minimum weight in $w$.
	Let $\eps \in (0,1/2)$. There exists a $(1\pm\eps)$-coreset of $X,w$ for logistic regression of size $k\in O(\frac{\mu\sqrt{n}}{\eps^2} d^{3/2}\log(\mu n d)\log(\omega n))$. Such a coreset can be constructed in two passes over the data, in $O(\nnz(X)\log n+ \poly(d)\log n)$ time, and with success probability $1-1/n^c$ for any absolute constant $c>1$.
\end{theorem}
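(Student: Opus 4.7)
The plan is to instantiate the sensitivity sampling framework of Theorem \ref{thm:sensitivity} with the sensitivity upper bounds from Lemma \ref{lem:totalsensitivity} and the VC dimension bound from Lemma \ref{lem:vc2}, using an approximate leverage-score routine to achieve input-sparsity running time. First, I would compute constant-factor approximations $\tilde\ell_i$ to the $\ell_2^2$-leverage scores $\|U_i\|_2^2$ of $D_w X$ in a single pass via the oblivious subspace embedding of \cite{ClarksonW13} combined with a Johnson--Lindenstrauss sketch for per-row norms; this costs $O(\nnz(X) + \poly(d))$ time and succeeds with probability $1 - 1/n^{c+1}$. Defining $s_i := (20+2\mu)(\sqrt{\tilde\ell_i} + w_i/\wcal)$ still gives valid sensitivity upper bounds up to constants, and Lemma \ref{lem:totalsensitivity} then yields $S = \sum_i s_i = O(\mu\sqrt{nd})$.

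The second step, which I expect to be the main obstacle, is controlling the VC dimension of the sampled family via Lemma \ref{lem:vc2}, since naively the coreset scaling factors $w_j/s_j$ can take $\Theta(n)$ distinct values and would inflate $\Delta$ to $\Theta(nd)$, destroying sublinearity. I would resolve this by the weight-relative rounding $s'_i := w_i \cdot 2^{\lceil \log_2(s_i/w_i)\rceil}$, which satisfies $s_i \leq s'_i \leq 2s_i$, so the $s'_i$ are still sensitivity upper bounds and the total sensitivity at most doubles. Crucially, the ratio $s'_i/w_i$ is now a power of $2$ lying in the bounded range $[\Theta((20+2\mu)/\wcal), O((20+2\mu)/w_{\min})]$ of multiplicative extent $O(n\omega)$, so only $t = O(\log(\omega n))$ distinct values of $w_i/s'_i$ occur. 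Since the range space of the scaled family $\fcal^*$ in Theorem \ref{thm:sensitivity} depends only on the per-function scaling factor $w_j/s'_j$ and is invariant under the common global factor $S/|R|$, Lemma \ref{lem:vc2} yields $\Delta(\mathfrak{R}_{\fcal^*}) \leq t(d+1) = O(d\log(\omega n))$.

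Plugging $S = O(\mu\sqrt{nd})$, $\Delta = O(d\log(\omega n))$, and $\delta = n^{-c-O(1)}$ into the sample-size bound of Theorem \ref{thm:sensitivity} gives
\begin{align*}
|R| \;=\; O\!\left(\frac{S}{\eps^2}\bigl(\Delta \log S + \log(1/\delta)\bigr)\right)
\;=\; O\!\left(\frac{\mu \sqrt{n}\, d^{3/2}}{\eps^2}\log(\mu n d)\log(\omega n)\right),
\end{align*}
matching the claimed coreset size. The first pass computes the sketches for $\tilde\ell_i$; the second pass draws $|R|$ i.i.d. samples according to $s'_i/S$ and emits them with weights $u_j = Sw_j/(s'_j|R|)$, giving a total running time of $O(\nnz(X)\log n + \poly(d)\log n)$, where the $\log n$ factor accounts for random sampling via alias tables or pre-drawn exponentials. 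The overall success probability $1-1/n^c$ follows by a union bound over the leverage-score approximation event and the sensitivity-sampling event. The remaining bookkeeping is to absorb the constant distortion from $\tilde\ell_i$ and from rounding $s_i \mapsto s'_i$ into the $O(\cdot)$-hidden constants.
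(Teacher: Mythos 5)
Your proposal is correct and follows essentially the same route as the paper's proof: sensitivity upper bounds $(20+2\mu)(\twonorm{U_i}+w_i/\wcal)$ from Lemma \ref{lem:totalsensitivity}, the weight-relative power-of-two rounding of $s_i/w_i$ to obtain $t=O(\log(\omega n))$ distinct scaling factors for Lemma \ref{lem:vc2}, and the Clarkson--Woodruff sketch plus Johnson--Lindenstrauss row-norm approximation with reservoir-style sampling to achieve two passes in $O(\nnz(X)\log n+\poly(d)\log n)$ time. The only cosmetic difference is that you bound the range of $s'_i/w_i$ directly from the formula for $s_i$ rather than via the chain of inequalities in the paper's Equation (\ref{eqn:sensitivitypowers}); both yield the same $O(\log(\omega n))$ count.
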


\textbf{Recursive Algorithm}
Here we develop a recursive algorithm, inspired by the recursive sampling technique of \cite{ClarksonW15} for the Huber $M$-estimator, though adapted here for logistic regression. This yields a better dependence on the input size. More specifically, we can diminish the leading $\sqrt{n}$ factor to only $\log^c(n)$ for an absolute constant $c$. One complication is that the parameter $\mu$ grows in the recursion, which we need to control, while another complication is having to deal with the separate $\ell_1$ and uniform parts of our sampling distribution. 

We apply the Algorithm of Theorem \ref{thm:basicalg} recursively. To do so, we need to ensure that after one stage of subsampling and reweighting, the resulting data set remains $\mu'$-complex for a value $\mu'$ that is not too much larger than $\mu$. To this end, we first bound the VC dimension of a range space induced by an $\ell_1$ related family of functions.
\begin{lemma}
	\label{lem:vc_ell1}
	The range space induced by $\fcal_{\ell_1} = \{h_i(\beta) = w_i|x_i\beta| \,|\,i\in [n] \}$ satisfies $\Delta(\mathfrak{R}_{\fcal_{\ell_1}}) \leq 10(d+1)$.
\end{lemma}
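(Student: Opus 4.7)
The strategy is to cast the range predicate as a low-degree polynomial inequality in the query parameters and then invoke a classical sign-pattern bound. First I would note that for each $i \in [n]$ and each $(\beta, r) \in \REALS^d \times \REALS_{\geq 0}$, the predicate $h_i(\beta) \geq r$, i.e.\ $w_i|x_i\beta| \geq r$, is equivalent to
\[
p_i(\beta, r) \;:=\; w_i^2 (x_i\beta)^2 - r^2 \;\geq\; 0,
\]
where $p_i$ is a single polynomial of degree $2$ in the $D := d+1$ real variables $(\beta, r)$. In other words, every range in $\ranges(\fcal_{\ell_1})$ is of the form $\{h_i \in \fcal_{\ell_1} : p_i(\beta, r) \geq 0\}$ for some $(\beta, r) \in \REALS^{d+1}$. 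This removes the nonsmooth absolute value and reduces the whole problem to counting sign patterns of a finite collection of degree-$2$ polynomials in $\REALS^{d+1}$.

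Next, suppose that a subset $G = \{h_{i_1}, \ldots, h_{i_m}\} \subseteq \fcal_{\ell_1}$ is shattered. Then all $2^m$ subsets of $G$ must be realised as such polynomial sub-level sets, so the tuple $(p_{i_1}, \ldots, p_{i_m})$ must realise at least $2^m$ distinct sign patterns on $\REALS^{d+1}$. I would then invoke the Warren/Milnor--Thom bound: for $m \geq D$, the number of sign vectors of $m$ real polynomials of degree at most $\delta$ in $D$ variables is at most $(8 e \delta m / D)^D$. Specialising to $\delta = 2$ and $D = d+1$ yields
\[
2^m \;\leq\; \left(\frac{16\,e\,m}{d+1}\right)^{d+1}.
\]

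Finally I would solve this inequality. Writing $m = c(d+1)$ and taking $\log_2$ reduces it to $c \leq \log_2(16 e c)$. A direct numerical check shows that this fails already at $c = 9$ (since $\log_2(144\,e) \approx 8.6$), so no shattered set can have size at least $9(d+1)$, and therefore $\Delta(\mathfrak{R}_{\fcal_{\ell_1}}) \leq 9(d+1) \leq 10(d+1)$ as claimed. The corner case $m < D$ is trivial since then $m \leq d < 10(d+1)$.

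The only real obstacle is pinning down the constants in Warren's bound tightly enough to land on the stated threshold~$10$. A more elementary route would rewrite $|x_i\beta| \geq r$ as the disjunction $x_i\beta \geq r \lor x_i\beta \leq -r$, each defining a range space of halfspaces through the origin in $(\beta,r)$-space (VC dimension at most $d+1$), and then invoke the Blumer--Ehrenfeucht--Haussler--Warmuth union lemma for VC classes. This yields a qualitatively identical bound, but with a slightly worse constant, so I prefer the polynomial sign-pattern argument above.
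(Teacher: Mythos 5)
Your proof is correct, but it takes a genuinely different route from the paper's. You square the predicate to turn $w_i|x_i\beta|\geq r$ into a single degree-$2$ polynomial inequality $w_i^2(x_i\beta)^2 - r^2\geq 0$ in the $d+1$ parameters $(\beta,r)$ (valid since $w_i>0$ and $r\geq 0$), and then bound the number of realizable ranges on a shattered set via Warren's sign-pattern theorem; the arithmetic $c\leq\log_2(16ec)$ failing at $c=9$ does close the argument, and monotonicity of $c-\log_2(16ec)$ handles all larger sizes. The paper instead follows what you sketch as your "more elementary route": it writes $|x_i\beta|\geq r$ as the disjunction $w_ix_i\beta\geq r\;\vee\;-w_ix_i\beta\geq r$, bounds the number of induced subsets of $G$ by the \emph{square} of the growth function of affine halfspace ranges (observing that the two collections coincide because $\beta$ ranges over all of $\REALS^d$), and applies the Sauer--Shelah bound $\left(\frac{e|G|}{d+1}\right)^{d+1}$ to get $(10e)^{2(d+1)} < 2^{10(d+1)}$ at $|G|=10(d+1)$. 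The trade-off is roughly this: the paper's argument needs only the VC dimension of halfspaces plus an elementary product bound for disjunctions, whereas yours invokes the heavier Milnor--Thom/Warren machinery but is a single clean step that would extend verbatim to any range family defined by bounded-degree polynomial thresholds. The one caveat in your version is exactly the one you flag: the conclusion hinges on the precise constant in the non-strict sign-pattern form of Warren's bound (the $\left(8e\delta m/D\right)^D$ variant); with that commonly cited constant your numerics are fine, but a sloppier constant could push you past the stated threshold of $10(d+1)$, whereas the paper's Sauer--Shelah route has no such sensitivity.
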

Applying Theorem \ref{thm:sensitivity} to $\fcal_{\ell_1}$ implies that the subsample of Theorem \ref{thm:basicalg} satisfies a so called $\eps$-subspace embedding property for $\ell_1$. Note that, by linearity of the $\ell_1$-norm, we can fold the weights into $D_wX$.
\begin{lemma}
	\label{lem:l1_subspace_embedding}
	Let $T$ be a sampling and reweighting matrix according to Theorem \ref{thm:basicalg}. I.e., $TD_wX$ is the resulting reweighted sample when Theorem \ref{thm:basicalg} is applied to $\mu$-complex input $X,w$. Then with probability $1-1/n^c$, for all $\beta\in\REALS^d$ simultaneously \[(1-\eps')\onenorm{D_wX\beta}\leq\onenorm{TD_wX\beta}\leq (1+\eps')\onenorm{D_wX\beta}\]
	holds, where $\eps'= \eps/\sqrt{\mu+1}$.
\end{lemma}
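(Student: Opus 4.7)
The plan is to re-apply Theorem~\ref{thm:sensitivity}, this time to the $\ell_1$-family $\fcal_{\ell_1}=\{h_i(\beta)=w_i|x_i\beta|\}$, keeping the \emph{same} random sample $T$ that Theorem~\ref{thm:basicalg} produces. The key observation is that the sampling distribution of Theorem~\ref{thm:basicalg} admits an equivalent but tighter parameterization for the $\ell_1$-problem: writing $s_i=(20+2\mu)\tilde s_i$ with $\tilde s_i := \twonorm{U_i}+w_i/\wcal$ and $\tilde S := \sum_i\tilde s_i$, the probabilities $p_i=s_i/S=\tilde s_i/\tilde S$ and the rescaling weights $u_i = Sw_i/(s_i|R|) = \tilde S w_i/(\tilde s_i|R|)$ are identical in both views, so the sampled, reweighted matrix $TD_wX$ is literally unchanged.

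The first step is to bound the $\ell_1$-sensitivities. Any $y=D_wX\beta$ in the column space of $D_wX$ can be written $y=Uz$ for some $z\in\REALS^d$, and then $w_i|x_i\beta|=|U_iz|\leq \twonorm{U_i}\twonorm{z}=\twonorm{U_i}\twonorm{y}\leq \twonorm{U_i}\onenorm{D_wX\beta}$. Hence the $\ell_1$-sensitivity of $h_i$ is at most $\twonorm{U_i}\leq \tilde s_i$, and Cauchy--Schwarz yields $\tilde S\leq \sum_i\twonorm{U_i}+1\leq \sqrt{nd}+1$---a factor $(20+2\mu)$ below the total $S\leq 44\mu\sqrt{nd}$ that Lemma~\ref{lem:totalsensitivity} uses for the logistic loss.

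The second step is to invoke Theorem~\ref{thm:sensitivity} on $\fcal_{\ell_1}$ with the upper bounds $\tilde s_i$ and the VC-dimension $\Delta(\mathfrak{R}_{\fcal_{\ell_1}})\leq 10(d+1)$ from Lemma~\ref{lem:vc_ell1}; by linearity of $|\cdot|$ the weights fold into $D_wX$, so no $\log(\omega n)$ multiplicity factor arises here. The theorem guarantees that any sample of size $|R|=\Omega\bigl(\tilde S(d\log\tilde S+\log(1/\delta))/\eps'^2\bigr)$ drawn from the distribution $\tilde s_i/\tilde S$ yields an $\ell_1$-subspace embedding with distortion $\eps'$ and error probability at most $\delta$. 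Finally I compare sample sizes: Theorem~\ref{thm:basicalg} takes $|R|=k$ with $k=\Omega\bigl(S\,d\log(\omega n)\log S/\eps^2\bigr)$, achieving probability $1-1/n^c$ for the logistic bound. Substituting $S=(20+2\mu)\tilde S$ and inverting the $\ell_1$-requirement gives $\eps'^2\leq \eps^2\,\tilde S/S=\eps^2/(20+2\mu)\leq \eps^2/(\mu+1)$, so $\eps'\leq \eps/\sqrt{\mu+1}$; the slight slack in the logarithms and in the ratio $\Delta(\mathfrak{R}_{\fcal_{\ell_1}})/\Delta(\mathfrak{R}_{\fcal_{log}})$ can only make this tighter. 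Setting $\delta=1/(2n^c)$ and union-bounding with the logistic guarantee gives the stated $1-1/n^c$ success probability.

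The main obstacle, I expect, is the bookkeeping in the final step---ensuring that the different VC-dimension and logarithmic factors appearing in Theorem~\ref{thm:sensitivity} applied to $\fcal_{log}$ versus $\fcal_{\ell_1}$ line up so that the $\sqrt{\mu+1}$ improvement does not incur a polylogarithmic penalty. Conceptually, however, the whole gain comes from one clean observation: the $(20+2\mu)$ factor inflating the logistic sensitivities is absent in the $\ell_1$-problem, while the induced sampling distribution and reweighting coincide across the two views.
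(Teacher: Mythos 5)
Your proposal follows essentially the same route as the paper's own proof: bound the $\ell_1$-sensitivities by $\twonorm{U_i}$ via the same chain $|w_ix_i\beta|=|U_iR\beta|\leq\twonorm{U_i}\twonorm{D_wX\beta}\leq\twonorm{U_i}\onenorm{D_wX\beta}$, observe that the sampling weights $s_i$ from Theorem~\ref{thm:basicalg} exceed these by a factor of at least $\mu+1$ so the same sample absorbs that slack into the accuracy parameter, invoke Lemma~\ref{lem:vc_ell1} for the VC bound, and union-bound over $\fcal_{log}$ and $\fcal_{\ell_1}$ after halving $\delta$. Your accounting of the $(20+2\mu)$ ratio between $S$ and $\tilde S$ and of the unchanged sampling distribution is, if anything, slightly more explicit than the paper's, and it is correct.
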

Using this, we can show that the $\mu$-complexity is not violated too much after one stage of sampling.
\begin{lemma}
	\label{lem:muerror}
	Let $T$ be a sampling and reweighting matrix according to Theorem \ref{thm:basicalg} where parameter $\eps$ is replaced by $\eps/\sqrt{\mu+1}$. That is $TD_wX$ is the resulting reweighted sample when Theorem \ref{thm:basicalg} succeeds on $\mu$-complex input $X,w$. Suppose that simultaneously Lemma \ref{lem:l1_subspace_embedding} holds. Let $$\mu' = \mu_{Tw}(X) = \sup\limits_{\beta\in \REALS^d} \frac{\onenorm{(TD_wX\beta)^+}}{\onenorm{(TD_wX\beta)^-}}.$$ Then we have $\mu'\leq(1+\eps)\mu.$
\end{lemma}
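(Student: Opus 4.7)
Fix $\beta \in \REALS^d$ and set $v := D_wX\beta$, so that $P := \onenorm{v^+}$ and $N := \onenorm{v^-}$. Write $P', N'$ for the analogous quantities of $Tv$. By $\mu$-complexity $P \leq \mu N$, and evaluating the definition at $-\beta$ also gives $N \leq \mu P$; the goal is to show $P'/N' \leq (1+\eps)\mu$ for every $\beta$, so that taking the supremum yields $\mu' = \mu_{Tw}(X) \leq (1+\eps)\mu$.

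The first step is a clean structural observation: since the sampling-and-reweighting matrix $T$ has only non-negative entries, each entry of $Tv$ inherits the sign of its sampled coordinate $v_{\sigma(i)}$, so $(Tv)^\pm = T(v^\pm)$. Moreover $Tv^+$ and $Tv^-$ have disjoint supports, which yields $\onenorm{Tv} = P' + N'$ and $\mathbf{1}^T Tv = P' - N'$, mirroring $\onenorm{v} = P + N$ and $\mathbf{1}^T v = P - N$. Thus preserving $P'$ and $N'$ individually is equivalent to simultaneously preserving $\onenorm{Tv} \approx \onenorm{v}$ and $\mathbf{1}^T Tv \approx \mathbf{1}^T v$.

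The $\onenorm{\cdot}$ preservation is given by Lemma \ref{lem:l1_subspace_embedding}: $|(P' + N') - (P + N)| \leq \eps'(P+N)$ with $\eps' = \eps/\sqrt{\mu + 1}$. For the linear part I would apply Theorem \ref{thm:sensitivity} a second time to the auxiliary family $\fcal^+ = \{w_i (x_i\beta)^+ : i \in [n]\}$ (and symmetrically to $\fcal^-$). Two facts must be verified: (i) $\mathfrak{R}_{\fcal^+}$ has VC-dimension $O(d)$, which follows by the same argument as Lemma \ref{lem:vc_ell1}, since $(x_i\beta)^+$ is a linear function composed with the halfspace $\{\beta : x_i\beta \geq 0\}$; and (ii) each sensitivity of $\fcal^+$ is at most a factor $(\mu+1)$ larger than the corresponding $\ell_1$-sensitivity, because $\mu$-complexity implies $P(\beta) \geq \onenorm{v}/(\mu+1)$. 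The leverage-score-plus-uniform distribution of Theorem \ref{thm:basicalg} therefore dominates these inflated sensitivities, and the $\sqrt{\mu+1}$ factor that has been baked into the smaller target accuracy $\eps'$ is exactly what ensures that the already-drawn sample simultaneously provides an individual embedding $|P' - P| \leq \eps_0 P$, and symmetrically $|N' - N| \leq \eps_0 N$, for some $\eps_0 = O(\eps)$.

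Given the individual preservations, the lemma follows from a direct calculation,
\[
\frac{P'}{N'} \leq \frac{1+\eps_0}{1-\eps_0}\cdot \frac{P}{N} \leq \frac{1+\eps_0}{1-\eps_0}\cdot\mu \leq (1+\eps)\mu,
\]
for $\eps_0$ a sufficiently small constant multiple of $\eps$; the supremum over $\beta$ then concludes the proof. The step I expect to be the main obstacle is the bookkeeping in the previous paragraph: carefully verifying that the sample size of Theorem \ref{thm:basicalg}, which is calibrated for an $\ell_1$ embedding at accuracy $\eps/\sqrt{\mu+1}$, is in fact just large enough for the two individual $\fcal^\pm$ embeddings at accuracy $O(\eps)$, so that the $(\mu+1)$ blow-up in total sensitivity of $\fcal^\pm$ under $\mu$-complexity is exactly absorbed by the $\sqrt{\mu+1}$ tightening of the target accuracy.
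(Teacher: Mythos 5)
Your proposal is correct, but it departs from the paper's own proof at the decisive step, and in a way worth spelling out. The paper also reduces the claim to showing that the one-sided masses $P=\onenorm{(D_wX\beta)^+}$ and $N=\onenorm{(D_wX\beta)^-}$ are each preserved, but it tries to extract this from the $\ell_1$-subspace embedding of Lemma \ref{lem:l1_subspace_embedding} alone, via the asserted identity $\eta^++\eta^-=|\onenorm{TD_wX\beta}-\onenorm{D_wX\beta}|$ for the two one-sided deviations $\eta^\pm$. That identity holds only when the two deviations have the same sign; in general one only has $|\onenorm{TD_wX\beta}-\onenorm{D_wX\beta}|\le\eta^++\eta^-$, so preservation of the total $\ell_1$ norm does not by itself rule out the positive mass being oversampled while the negative mass is undersampled by a compensating amount. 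You identified exactly this issue (your remark that one must control $\mathbf{1}^{T}Tv$ in addition to $\onenorm{Tv}$) and supplied the missing ingredient: a further invocation of Theorem \ref{thm:sensitivity} on the clipped families $\fcal^{\pm}=\{w_i(x_i\beta)^{\pm}\}$, whose range spaces have VC dimension $O(d)$ by the argument of Lemma \ref{lem:vc_ell1} and whose sensitivities are at most $(\mu+1)\twonorm{U_i}$ by $\mu$-complexity. Your bookkeeping checks out, and in fact resolves more cleanly than you anticipated: since $s_i=(20+2\mu)(\twonorm{U_i}+w_i/\wcal)\ge(\mu+1)\twonorm{U_i}$, the sampling distribution of Theorem \ref{thm:basicalg} already dominates the $\fcal^{\pm}$ sensitivities with total at most $S$, so the sample at accuracy parameter $\eps$ (no $\sqrt{\mu+1}$ tightening is needed for this part, only a union bound over the extra families) directly yields $|P'-P|\le\eps P$ and $|N'-N|\le\eps N$; the concluding ratio computation is then the same $\frac{1+\eps}{1-\eps}\le 1+4\eps$ step as in the paper. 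In short, your route costs one more application of the sensitivity framework but buys individual relative-error control of the positive and negative masses, which is what the statement actually requires and what the paper's appeal to the $\ell_1$ embedding alone does not deliver.
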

Now we are ready to prove our theorem regarding the recursive subsampling algorithm.
\begin{theorem}
	\label{thm:recalg}
	Let $X\in\REALS^{n\times d}$ be $\mu$-complex. Let $\eps \in (0,1/2)$. There exists a $(1\pm\eps)$-coreset of $X$ for logistic regression of size $k\in O(\frac{\mu^3}{\eps^4} d^{3}\log^2(\mu n d)\log^2 n \, (\log \log n)^4 )$. Such a coreset can be constructed in time $O((\nnz(X)+\poly(d))\log n\log\log n)$ in $2\log(\frac{1}{\eta})$ passes over the data for a small $\eta>0$, assuming the machine has access to sufficient memory to store and process $\tilde O(n^\eta)$ weighted points. The success probability is $1-1/n^c$ for any absolute constant $c>1$.
\end{theorem}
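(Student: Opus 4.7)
The plan is to apply the base algorithm of Theorem \ref{thm:basicalg} recursively for $T = \Theta(\log \log n)$ stages. Starting from $(X^{(0)},w^{(0)}) = (X,w)$, at stage $t$ I invoke Theorem \ref{thm:basicalg} on $(X^{(t)},w^{(t)})$ with per-stage parameter $\eps_t = \Theta(\eps/T)$, substituted by $\eps_t/\sqrt{\mu_t+1}$ when passed to the base algorithm so that Lemma \ref{lem:muerror} applies. Denote the resulting subsample by $(X^{(t+1)},w^{(t+1)})$ with size $n_{t+1}$; the final iterate $(X^{(T)},w^{(T)})$ is returned as the coreset, and the overall failure probability is absorbed by a union bound over the $T$ stages.

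Correctness requires two invariants maintained across all stages. First, by Lemma \ref{lem:muerror}, $\mu_{t+1} \leq (1+\eps_t)\mu_t$, so $\mu_T \leq \prod_t(1+\eps_t)\mu \leq e^\eps\mu \leq 2\mu$ throughout, meaning each stage operates on a $2\mu$-complex input and the hypothesis of Theorem \ref{thm:basicalg} continues to hold. Second, each stage yields a $(1\pm\eps_t/\sqrt{\mu_t+1})$-coreset for the current weighted logistic-regression objective, and these errors compose multiplicatively across $T$ stages to at most $\prod_t(1+\eps_t) - 1 \leq e^\eps - 1 \leq 2\eps$, giving the claimed $(1\pm\eps)$ guarantee after rescaling $\eps$ by a constant. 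A small bookkeeping step is to snap the induced weights at every stage onto a grid of $O(\log(\omega n))$ values so that the number of distinct weights remains polylogarithmic and Lemma \ref{lem:vc2} keeps the VC dimension at $O(d\log n)$ across all recursion levels.

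The size bound follows by unrolling the resulting recurrence. Since the base algorithm output size is $O\bigl(\mu_t\sqrt{n_t}\cdot d^{3/2}\log(\mu_t n_t d)\log(\omega_t n_t)/\eps_{t,\text{base}}^2\bigr)$ and $\eps_{t,\text{base}} = \eps_t/\sqrt{\mu_t+1}$, the recurrence takes the form $n_{t+1} \leq C\,\mu_t^{2}\sqrt{n_t}$ where $C = O\bigl((d^{3/2}/\eps_t^{2})\log(\mu nd)\log(\omega n)\bigr)$ absorbs the poly/polylog factors. Telescoping gives $n_T \leq (C\mu^{2})^{\,2-2^{1-T}} n^{2^{-T}}$, and with $T = \lceil\log\log n\rceil$ the $n^{2^{-T}}$ factor drops to $O(1)$, leaving $n_T$ bounded by a small polynomial in $C$ and $\mu$. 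Substituting $\eps_t = \Theta(\eps/T)$ produces the factor $(\log\log n)^{4}/\eps^{4}$ from $1/\eps_t^{4}$, squaring $C$ produces $d^{3}\log^{2}(\mu nd)\log^{2}n$, and the $\mu$-powers coming from telescoping combined with the $\sqrt{\mu+1}$ substitution in each per-stage call yield the stated $\mu^{3}$. For the pass and time budget, each invocation of Theorem \ref{thm:basicalg} uses two passes and time $O((\nnz+\poly(d))\log n)$; executing the first $\log(1/\eta)$ stages as actual streaming passes shrinks the data to $\tilde O(n^\eta)$ points that fit in memory, and the remaining $\Theta(\log\log n)$ stages run in memory without further passes, giving $2\log(1/\eta)$ total passes and total time $O((\nnz(X)+\poly(d))\log n \log\log n)$ after summing stage times.

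The main obstacle is the simultaneous balancing of three quantities across the $T = \Theta(\log\log n)$ stages: the cumulative coreset error must stay within $\eps$, the $\mu$-complexity must grow by at most a constant factor, and the polynomial blow-up in $1/\eps_t$ and $\mu_t$ in each per-stage size must not overwhelm the polylogarithmic dependence on $n$ in the final bound. Controlling $\mu$-growth rests on Lemma \ref{lem:muerror}, which itself invokes the $\ell_1$ subspace embedding of Lemma \ref{lem:l1_subspace_embedding}; both events together with the base-algorithm success event must hold at every stage simultaneously by a union bound without inflating either the size or the stated failure probability $1/n^c$, which is the central technical challenge of the argument.
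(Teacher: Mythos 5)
Your overall scheme is the same as the paper's: recurse on Theorem \ref{thm:basicalg} for $\Theta(\log\log n)$ stages with per-stage error $\eps_t/\sqrt{\mu_t+1}$, use Lemma \ref{lem:muerror} (via Lemma \ref{lem:l1_subspace_embedding}) to keep the $\mu$-complexity within a constant factor, compose the multiplicative errors, union-bound the failure events, and peel off the first $\log(1/\eta)$ stages as streaming passes. The error composition, the $\mu$-growth control, and the pass/time accounting are all fine.

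The genuine gap is in the final size bound. Unrolling your own recurrence $n_{t+1}\leq C\mu_t^2\sqrt{n_t}$ with $C=O\bigl(d^{3/2}T^2\eps^{-2}\log(\mu nd)\log(\omega n)\bigr)$ gives $n_T=O\bigl((C\mu^2)^2\bigr)=O\bigl(\frac{\mu^4}{\eps^4}d^3\log^2(\mu nd)\log^2(\omega n)(\log\log n)^4\bigr)$, and since the weight ratio $\omega$ grows to $(2n)^{O(\log\log n)}$ across levels, $\log(\omega n)=O(\log n\log\log n)$, so this is $O\bigl(\frac{\mu^4}{\eps^4}d^3\log^2(\mu nd)\log^2 n\,(\log\log n)^6\bigr)$ --- a factor $\mu(\log\log n)^2$ \emph{larger} than the claimed $\frac{\mu^3}{\eps^4}d^3\log^2(\mu nd)\log^2 n\,(\log\log n)^4$. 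Your assertion that the telescoped $\mu$-powers ``yield the stated $\mu^3$'' does not hold: every intermediate stage pays $\mu_t^2$ (one factor $\mu_t$ from the sensitivity bound and one from the $1/\eps_{t,\mathrm{base}}^2=(\mu_t+1)/\eps_t^2$ substitution), and the telescoping doubles that. The missing idea is a final clean-up invocation of Theorem \ref{thm:basicalg} on the last iterate with the \emph{original, unscaled} parameter $\eps$ (neither divided by $T$ nor by $\sqrt{\mu+1}$): since the output of the last stage is the coreset itself and need not remain $\mu$-complex, Lemma \ref{lem:muerror} is not needed there, so one pays only $\frac{\mu}{\eps^2}d^{3/2}\log(\mu nd)\log(\omega n)$ times $\sqrt{n_T}$. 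Taking the square root of the $\mu^4(\log\log n)^6$ bound and multiplying by this single-stage factor is exactly what brings the exponents down to the stated $\mu^3$ and $(\log\log n)^4$. Without that last step your argument proves a weaker bound than the theorem claims.
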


\section{Experiments}
We ran a series of experiments to illustrate the performance of our coreset method. All experiments were run on a Linux machine using an Intel i7-6700, 4 core CPU at 3.4~GHz, and 32GB of RAM. We implemented our algorithms in Python. Now, we compare our basic algorithm to simple uniform sampling and to sampling proportional to the sensitivity upper bounds given by \cite{HugginsCB16}.

\textbf{Implementation Details}
The approach of \cite{HugginsCB16} is based on a $k$-means++ clustering \citep{ArthurV07} on a small uniform sample of the data and was performed using standard parameters taken from the publication. For this purpose we used parts of their original Python code. However, we removed the restriction of the domain of optimization to a region of small radius around the origin. This way, we enabled unconstrained regression in the domain $\REALS^d$.

The exact QR-decomposition is rather slow on large data matrices. We thus optimized the running time of our approach in the following way. We used a fast approximation algorithm based on the sketching techniques of \cite{ClarksonW13}, cf. \citep{Woodruff14}. That leads to a provable constant approximation of the square root of the leverage scores with constant probability, cf. \citep{DrineasMMW12}, which means that the total sensitivity bounds given in our theory will grow by only a small constant factor. A detailed description of the algorithm is in the proof of Theorem \ref{thm:basicalg}.

The subsequent optimization was done for all approaches with the standard gradient based optimizer from the \texttt{scipy.optimize}\footnote{http://www.scipy.org/} package.

\textbf{Data Sets}
We briefly introduce the data sets that we used. 
The \textsc{Webb Spam}\footnote{https://www.cc.gatech.edu/projects/doi/WebbSpamCorpus.html} ~data consists of $350,000$ unigrams with $127$ features from web pages which have to be classified as spam or normal pages ($61\%$ positive).
The \textsc{Covertype}\footnote{https://archive.ics.uci.edu/ml/datasets/covertype} ~data consists of $581,012$ cartographic observations of different forests with $54$ features. The task is to predict the type of trees at each location ($49\%$ positive).
The \textsc{KDD Cup '99}\footnote{http://kdd.ics.uci.edu/databases/kddcup99/kddcup99.html} ~data comprises $494,021$ network connections with $41$ features and the task is to detect network intrusions ($20\%$ positive).

\textbf{Experimental Assessment}
For each data set we assessed the total running times for computing the sampling probabilities, sampling and optimizing on the sample.
In order to assess the approximation accuracy we examined the relative error $|\mathcal{L}(\beta^*|X)-\mathcal{L}(\tilde\beta|X)|/\mathcal{L}(\beta^*|X)$ of the negative log-likelihood for the maximum likelihood estimators obtained from the full data set $\beta^*$ and the subsamples $\tilde \beta$.

For each data set, we ran all three subsampling algorithms for a number of thirty regular subsampling steps in the range $k\in[\lfloor 2\sqrt{n}\rfloor ,\lceil n/16 \rceil]$. For each step, we present the mean relative error as well as the trade-off between mean relative error and running time, taken over twenty independent repetitions, in Figure \ref{fig:experiments}. Relative running times, standard deviations and absolute values are presented in Figure \ref{fig:supp:variances} respectively in Table \ref{tbl:abs_vals} in Appendix \ref{app:experiments}.

\begin{figure*}[ht!]
	\vskip 0.2in
	\begin{center}
		\begin{sc}
			\begin{tabular}{ccc}
				{\small\hspace{.5cm}Webb Spam}&{\small\hspace{.5cm}Covertype}&{\small\hspace{.5cm}KDD Cup '99} \\
				\includegraphics[width=0.30\linewidth]{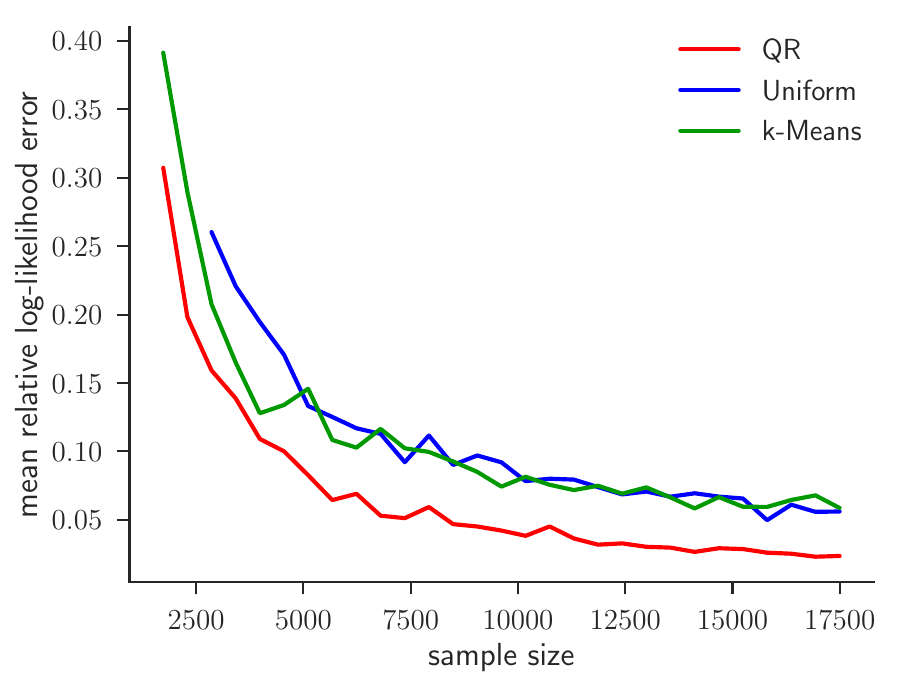}&
				\includegraphics[width=0.30\linewidth]{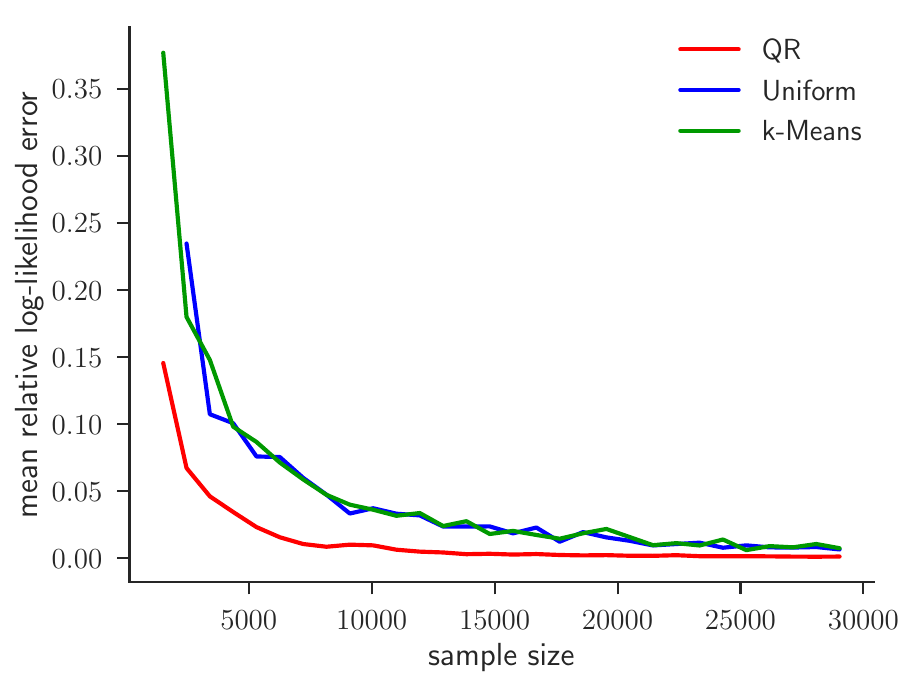}&
				\includegraphics[width=0.30\linewidth]{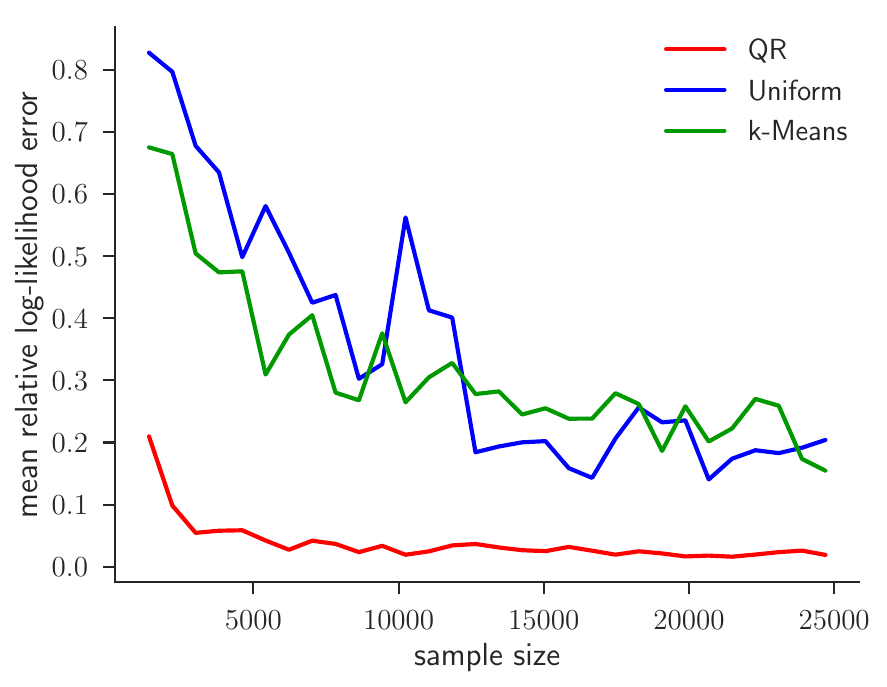} \\ 
				\includegraphics[width=0.30\linewidth]{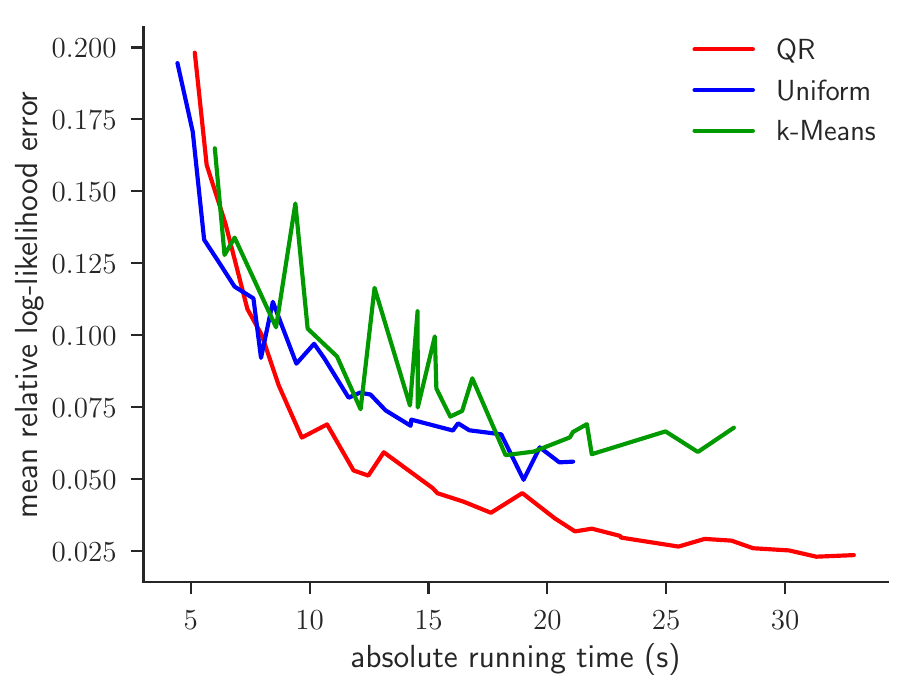}&
				\includegraphics[width=0.30\linewidth]{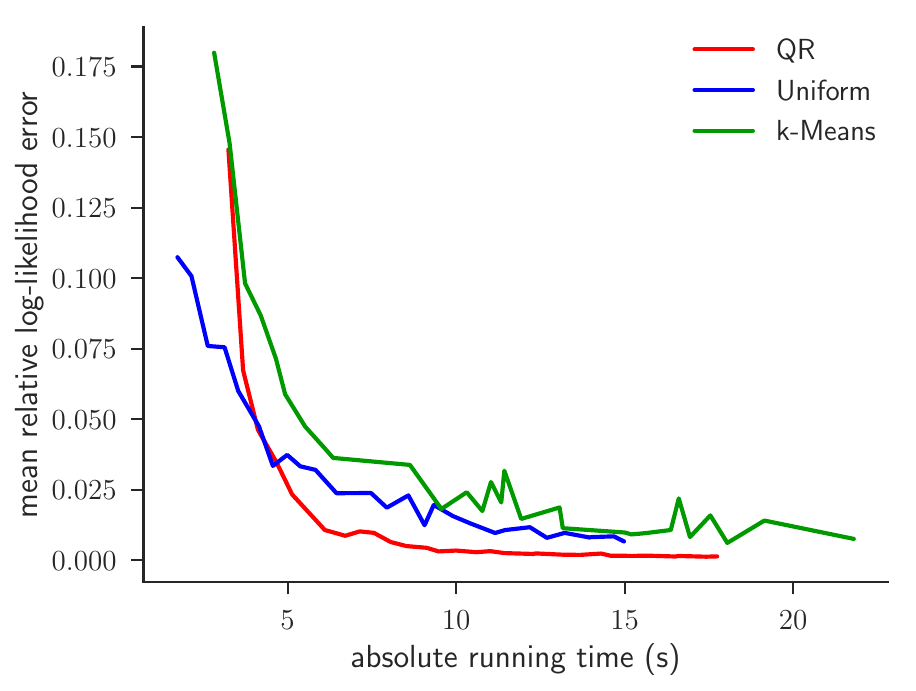}&
				\includegraphics[width=0.30\linewidth]{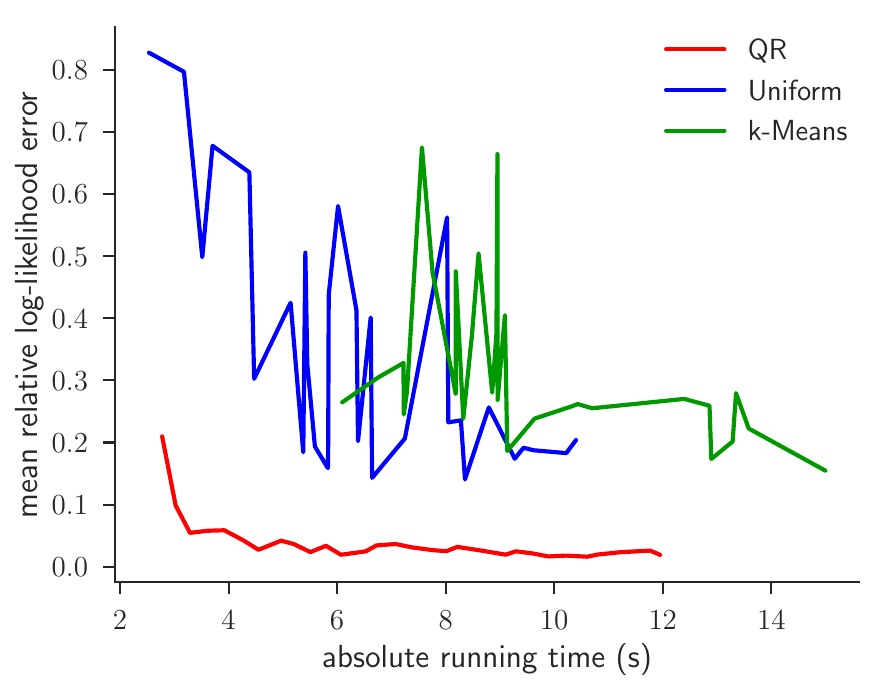} \\
			\end{tabular}
		\end{sc}
		\caption{Each column shows the results for one data set comprising thirty different coreset sizes (depending on the individual size of the data sets). The plotted values are means taken over twenty independent repetitions of each experiment. The plots in the upper row show the mean relative log-likelihood errors of the three subsampling distributions, uniform sampling (blue), our QR derived distribution (red), and the $k$-means based distribution (green). All values are relative to the corresponding optimal log-likelihood values of the optimization task on the full data set. The plots in the lower row show the trade-off between running time and relative errors (lower is better).}
		\label{fig:experiments}
	\end{center}
	\vskip -0.2in
\end{figure*}

\textbf{Evaluation}
The accuracy of the QR-sampling distribution outperforms uniform sampling and the distribution derived from $k$-means on all instances.
This is especially true for small sampling sizes. Here, the relative error especially for uniform sampling tends to deteriorate.
While $k$-means sampling occasionally improved over uniform sampling for small sample sizes, the behavior of both distributions was similar for larger sampling sizes. The standard deviations had a similarly low magnitude as the mean values, where the QR method usually showed the lowest values.

The trade-off between the running time and relative errors shows a common picture for \textsc{Webb Spam} and \textsc{Covertype}. QR is nearly always more accurate than the other algorithms for a similar time budget, except for regions where the relative error is large, say above 5-10\% while for larger time budgets, QR is better by a factor between $1.5$-$3$ and drops more quickly towards $0$. The conclusion so far could be that for a quick guess, say a $1.1$-approximation, the competitors are faster, but to provably obtain a reasonably small relative error below 5\%, QR outperforms its competitors. However, for \textsc{KDD Cup '99}, QR always has a lower error than its competitors. Their relative errors remain above 15\% or much worse, while QR never exceeds 22\% and drops quickly below 4\%. As a side note, our estimates for $\mu$ support our experimental findings, especially that \textsc{KDD Cup '99} seems more difficult to approximate than the others. The estimated values were $4.39$ for \textsc{Webb Spam}, $1.86$ for \textsc{Covertype}, and $35.18$ for \textsc{KDD Cup '99}.

The relative running time for the QR-distribution was comparable to $k$-means and only slightly higher than uniform sampling. However, it never exceeded a factor of two compared to its competitors and remained negligible compared to the full optimization task, see Figure \ref{fig:supp:variances} in Appendix \ref{app:experiments}.
The standard deviations were negligible except for the $k$-means algorithm and the \textsc{KDD Cup '99} data set, where the uniform and $k$-means based algorithms showed larger values. The QR method had much lower standard deviations. This indicates that the resulting coresets are more stable for the subsequent numerical optimization.

We note that the savings of all presented data reduction methods become even more significant when performing more time consuming data analysis tasks like MCMC sampling in a Bayesian setting, see e.g., \citep{HugginsCB16,GeppertIMQS17}.

\section{Conclusions}
We first showed that (sublinear) coresets for logistic regression do not exist in general. It is thus necessary to make further assumptions on the nature of the data. To this end we introduced a new complexity measure $\mu(X)$, which quantifies the amount of overlap of positive and negative classes and the balance in their cardinalities. We developed the first rigorously sublinear $(1\pm\eps)$-coresets for logistic regression, given that the original data has small $\mu$-complexity. The leading factor is $O(\eps^{-2}\mu\sqrt{n})$. We have further developed a recursive coreset construction that reduces the dependence on the input size to only $O(\log^c n)$ for absolute constant $c$. This comes at the cost of an increased dependence on $\mu$. However, it is beneficial for very large and well-behaved data. Our algorithms are space efficient, and can be implemented in a variety of models, used to tackle the challenges of large data sets, such as $2$-pass streaming, and massively parallel frameworks like Hadoop and MapReduce, and can be implemented to run in input sparsity time $\tilde O(\nnz(X))$, which is especially beneficial for sparsely encoded input data.

Our experimental evaluation shows that our implementation of the basic algorithm outperforms uniform sampling as well as state of the art methods in the area of coresets for logistic regression while being competitive to both regarding its running time.

\section*{Acknowledgments} We thank the anonymous reviewers for their valuable comments. We also thank our student assistant Moritz Paweletz for implementing and conducting the experiments. This work was partly supported by the German Science Foundation (DFG) Collaborative Research Center SFB 876 "Providing Information by Resource-Constrained Analysis", projects A2 and C4 and by the ERC Advanced Grant 788893 AMDROMA. 

\bibliography{references}

\begin{thebibliography}{44}
\providecommand{\natexlab}[1]{#1}
\providecommand{\url}[1]{\texttt{#1}}
\expandafter\ifx\csname urlstyle\endcsname\relax
  \providecommand{\doi}[1]{doi: #1}\else
  \providecommand{\doi}{doi: \begingroup \urlstyle{rm}\Url}\fi

\bibitem[Agarwal et~al.(2004)Agarwal, Har{-}Peled, and
  Varadarajan]{AgarwalHV04}
Agarwal, Pankaj~K., Har{-}Peled, Sariel, and Varadarajan, Kasturi~R.
\newblock Approximating extent measures of points.
\newblock \emph{Journal of the {ACM}}, 51\penalty0 (4):\penalty0 606--635,
  2004.

\bibitem[Alaoui \& Mahoney(2015)Alaoui and Mahoney]{AlM15}
Alaoui, Ahmed~El and Mahoney, Michael~W.
\newblock Fast randomized kernel ridge regression with statistical guarantees.
\newblock In \emph{Advances in Neural Information Processing Systems 28
  {(NIPS)}}, pp.\  775--783, 2015.

\bibitem[Arthur \& Vassilvitskii(2007)Arthur and Vassilvitskii]{ArthurV07}
Arthur, David and Vassilvitskii, Sergei.
\newblock k-means++: the advantages of careful seeding.
\newblock In \emph{Proceedings of the $18^{th}$ Annual {ACM-SIAM} Symposium on
  Discrete Algorithms {(SODA)}}, pp.\  1027--1035, 2007.

\bibitem[Bachem et~al.(2018)Bachem, Lucic, and Krause]{BachemLK18}
Bachem, Olivier, Lucic, Mario, and Krause, Andreas.
\newblock Scalable $k$-means clustering via lightweight coresets.
\newblock In \emph{Proceedings of the 24th {ACM} {SIGKDD} International
  Conference on Knowledge Discovery {\&} Data Mining {(KDD)}}, pp.\
  1119--1127, 2018.

\bibitem[Balcan et~al.(2015)Balcan, Manthey, R{\"o}glin, and
  Roughgarden]{BalcanMRR15}
Balcan, Marina-Florina, Manthey, Bodo, R{\"o}glin, Heiko, and Roughgarden, Tim.
\newblock {Analysis of algorithms beyond the worst case (Dagstuhl seminar
  14372)}.
\newblock \emph{Dagstuhl Reports}, 4\penalty0 (9):\penalty0 30--49, 2015.

\bibitem[Barger \& Feldman(2016)Barger and Feldman]{BargerF16}
Barger, Artem and Feldman, Dan.
\newblock $k$-means for streaming and distributed big sparse data.
\newblock In \emph{Proceedings of the {SIAM} International Conference on Data
  Mining {(SDM)}}, pp.\  342--350, 2016.

\bibitem[Blumer et~al.(1989)Blumer, Ehrenfeucht, Haussler, and
  Warmuth]{BlumerEHW89}
Blumer, Anselm, Ehrenfeucht, Andrzej, Haussler, David, and Warmuth, Manfred~K.
\newblock Learnability and the {V}apnik-{C}hervonenkis dimension.
\newblock \emph{Journal of the {ACM}}, 36\penalty0 (4):\penalty0 929--965,
  1989.

\bibitem[Braverman et~al.(2016)Braverman, Feldman, and Lang]{BravermanFL16}
Braverman, Vladimir, Feldman, Dan, and Lang, Harry.
\newblock New frameworks for offline and streaming coreset constructions.
\newblock \emph{arXiv preprint CoRR}, abs/1612.00889, 2016.

\bibitem[Chao(1982)]{Chao82}
Chao, M.~T.
\newblock A general purpose unequal probability sampling plan.
\newblock \emph{Biometrika}, 69\penalty0 (3):\penalty0 653--656, 1982.

\bibitem[Clarkson(2005)]{Clarkson05}
Clarkson, Kenneth~L.
\newblock Subgradient and sampling algorithms for $\ell_1$ regression.
\newblock In \emph{Proceedings of the $16^{th}$ annual ACM-SIAM symposium on
  Discrete algorithms {(SODA)}}, pp.\  257--266, 2005.

\bibitem[Clarkson \& Woodruff(2013)Clarkson and Woodruff]{ClarksonW13}
Clarkson, Kenneth~L. and Woodruff, David~P.
\newblock Low rank approximation and regression in input sparsity time.
\newblock In \emph{Symposium on Theory of Computing {(STOC)}}, pp.\  81--90,
  2013.

\bibitem[Clarkson \& Woodruff(2015{\natexlab{a}})Clarkson and
  Woodruff]{ClarksonW15}
Clarkson, Kenneth~L. and Woodruff, David~P.
\newblock Sketching for \emph{M}-estimators: {A} unified approach to robust
  regression.
\newblock In \emph{Proceedings of the $26^{th}$ Annual {ACM-SIAM} Symposium on
  Discrete Algorithms {(SODA)}}, pp.\  921--939, 2015{\natexlab{a}}.

\bibitem[Clarkson \& Woodruff(2015{\natexlab{b}})Clarkson and
  Woodruff]{ClarksonW15_focs}
Clarkson, Kenneth~L. and Woodruff, David~P.
\newblock Input sparsity and hardness for robust subspace approximation.
\newblock In \emph{{IEEE} 56th Annual Symposium on Foundations of Computer
  Science {(FOCS)}}, pp.\  310--329, 2015{\natexlab{b}}.

\bibitem[Clarkson et~al.(2016)Clarkson, Drineas, Magdon{-}Ismail, Mahoney,
  Meng, and Woodruff]{ClarksonDMMMW16}
Clarkson, Kenneth~L., Drineas, Petros, Magdon{-}Ismail, Malik, Mahoney,
  Michael~W., Meng, Xiangrui, and Woodruff, David~P.
\newblock The fast {C}auchy transform and faster robust linear regression.
\newblock \emph{{SIAM} J. Comput.}, 45\penalty0 (3):\penalty0 763--810, 2016.

\bibitem[Cohen et~al.(2015)Cohen, Lee, Musco, Musco, Peng, and
  Sidford]{CLMMPS15}
Cohen, Michael~B., Lee, Yin~Tat, Musco, Cameron, Musco, Christopher, Peng,
  Richard, and Sidford, Aaron.
\newblock Uniform sampling for matrix approximation.
\newblock In \emph{Proceedings of the Conference on Innovations in Theoretical
  Computer Science {(ITCS)}}, pp.\  181--190, 2015.

\bibitem[Cohen et~al.(2017)Cohen, Musco, and Musco]{CMM17}
Cohen, Michael~B., Musco, Cameron, and Musco, Christopher.
\newblock Input sparsity time low-rank approximation via ridge leverage score
  sampling.
\newblock In \emph{Proceedings of the $28^{th}$ Annual {ACM-SIAM} Symposium on
  Discrete Algorithms {(SODA)}}, pp.\  1758--1777, 2017.

\bibitem[Dasgupta et~al.(2009)Dasgupta, Drineas, Harb, Kumar, and
  Mahoney]{DasguptaDHKM09}
Dasgupta, Anirban, Drineas, Petros, Harb, Boulos, Kumar, Ravi, and Mahoney,
  Michael~W.
\newblock Sampling algorithms and coresets for $\ell_p$ regression.
\newblock \emph{{SIAM} Journal on Computing}, 38\penalty0 (5):\penalty0
  2060--2078, 2009.

\bibitem[Drineas et~al.(2006)Drineas, Mahoney, and Muthukrishnan]{DrineasMM06}
Drineas, Petros, Mahoney, Michael~W., and Muthukrishnan, S.
\newblock Sampling algorithms for $\ell_2$ regression and applications.
\newblock In \emph{Proceedings of the $17^{th}$ Annual {ACM-SIAM} Symposium on
  Discrete Algorithms {(SODA)}}, pp.\  1127--1136, 2006.

\bibitem[Drineas et~al.(2008)Drineas, Mahoney, and Muthukrishnan]{DrineasMM08}
Drineas, Petros, Mahoney, Michael~W., and Muthukrishnan, S.
\newblock Relative-error {CUR} matrix decompositions.
\newblock \emph{{SIAM} Journal on Matrix Analysis and Applications},
  30\penalty0 (2):\penalty0 844--881, 2008.

\bibitem[Drineas et~al.(2012)Drineas, Magdon{-}Ismail, Mahoney, and
  Woodruff]{DrineasMMW12}
Drineas, Petros, Magdon{-}Ismail, Malik, Mahoney, Michael~W., and Woodruff,
  David~P.
\newblock Fast approximation of matrix coherence and statistical leverage.
\newblock \emph{Journal of Machine Learning Research}, 13:\penalty0 3475--3506,
  2012.

\bibitem[Feldman \& Langberg(2011)Feldman and Langberg]{FeL11}
Feldman, Dan and Langberg, Michael.
\newblock A unified framework for approximating and clustering data.
\newblock In \emph{Proceedings of the 43rd {ACM} Symposium on Theory of
  Computing {(STOC)}}, pp.\  569--578, 2011.

\bibitem[Feldman et~al.(2011)Feldman, Faulkner, and Krause]{FeldmanFK11}
Feldman, Dan, Faulkner, Matthew, and Krause, Andreas.
\newblock Scalable training of mixture models via coresets.
\newblock In \emph{Advances in Neural Information Processing Systems 24
  {(NIPS)}}, pp.\  2142--2150, 2011.

\bibitem[Feldman et~al.(2013)Feldman, Schmidt, and Sohler]{FeldmanSS13}
Feldman, Dan, Schmidt, Melanie, and Sohler, Christian.
\newblock Turning big data into tiny data: Constant-size coresets for
  \emph{k}-means, {PCA} and projective clustering.
\newblock In \emph{Proceedings of the $24^{th}$ Annual {ACM-SIAM} Symposium on
  Discrete Algorithms {(SODA)}}, pp.\  1434--1453, 2013.

\bibitem[Geppert et~al.(2017)Geppert, Ickstadt, Munteanu, Quedenfeld, and
  Sohler]{GeppertIMQS17}
Geppert, Leo~N., Ickstadt, Katja, Munteanu, Alexander, Quedenfeld, Jens, and
  Sohler, Christian.
\newblock Random projections for {B}ayesian regression.
\newblock \emph{Statistics and Computing}, 27\penalty0 (1):\penalty0 79--101,
  2017.

\bibitem[Golub \& van Loan(2013)Golub and van Loan]{GolubL96}
Golub, Gene~H. and van Loan, Charles~F.
\newblock \emph{Matrix computations {(4.} ed.)}.
\newblock J. Hopkins Univ. Press, 2013.

\bibitem[Heinze \& Schemper(2002)Heinze and Schemper]{HeinzeS02}
Heinze, Georg and Schemper, Michael.
\newblock A solution to the problem of separation in logistic regression.
\newblock \emph{Statistics in Medicine}, 21\penalty0 (16):\penalty0 2409--2419,
  2002.

\bibitem[Huggins et~al.(2016)Huggins, Campbell, and Broderick]{HugginsCB16}
Huggins, Jonathan~H., Campbell, Trevor, and Broderick, Tamara.
\newblock Coresets for scalable {B}ayesian logistic regression.
\newblock In \emph{Advances in Neural Information Processing Systems 29
  {(NIPS)}}, pp.\  4080--4088, 2016.

\bibitem[Johnson \& Lindenstrauss(1984)Johnson and Lindenstrauss]{JohnsonL84}
Johnson, William~B and Lindenstrauss, Joram.
\newblock Extensions of {L}ipschitz mappings into a {H}ilbert space.
\newblock \emph{Contemporary Mathematics}, {26}\penalty0 (1):\penalty0
  189--206, 1984.

\bibitem[Kearns \& Vazirani(1994)Kearns and Vazirani]{KearnsV94}
Kearns, Michael~J. and Vazirani, Umesh~V.
\newblock \emph{An Introduction to Computational Learning Theory}.
\newblock {MIT} Press, 1994.

\bibitem[Kremer et~al.(1999)Kremer, Nisan, and Ron]{KNR99}
Kremer, Ilan, Nisan, Noam, and Ron, Dana.
\newblock On randomized one-round communication complexity.
\newblock \emph{Computational Complexity}, 8\penalty0 (1):\penalty0 21--49,
  1999.

\bibitem[Langberg \& Schulman(2010)Langberg and Schulman]{LangbergS10}
Langberg, Michael and Schulman, Leonard~J.
\newblock Universal $\varepsilon$-approximators for integrals.
\newblock In \emph{Proceedings of the $21^{st}$ Annual {ACM-SIAM} Symposium on
  Discrete Algorithms {(SODA)}}, pp.\  598--607, 2010.

\bibitem[Li et~al.(2013)Li, Miller, and Peng]{LiMP13}
Li, Mu, Miller, Gary~L., and Peng, Richard.
\newblock Iterative row sampling.
\newblock In \emph{54th Annual {IEEE} Symposium on Foundations of Computer
  Science {(FOCS)}}, pp.\  127--136, 2013.

\bibitem[Lucic et~al.(2016)Lucic, Bachem, and Krause]{LucicBK16}
Lucic, Mario, Bachem, Olivier, and Krause, Andreas.
\newblock Strong coresets for hard and soft {B}regman clustering with
  applications to exponential family mixtures.
\newblock In \emph{Proceedings of the 19th International Conference on
  Artificial Intelligence and Statistics {(AISTATS)}}, pp.\  1--9, 2016.

\bibitem[McCullagh \& Nelder(1989)McCullagh and Nelder]{McCullaghN89}
McCullagh, P. and Nelder, J.~A.
\newblock \emph{Generalized Linear Models}.
\newblock Chapman \& Hall, London, 1989.

\bibitem[Mehta \& Patel(1995)Mehta and Patel]{MethaP95}
Mehta, Cyrus~R. and Patel, Nitin~R.
\newblock Exact logistic regression: Theory and examples.
\newblock \emph{Statistics in Medicine}, 14\penalty0 (19):\penalty0 2143--2160,
  1995.

\bibitem[Molina et~al.(2018)Molina, Munteanu, and Kersting]{MolinaMK17}
Molina, Alejandro, Munteanu, Alexander, and Kersting, Kristian.
\newblock Core dependency networks.
\newblock In \emph{Proceedings of the 32nd {AAAI} Conference on Artificial
  Intelligence (AAAI)}, 2018.

\bibitem[Musco \& Musco(2017)Musco and Musco]{MM17}
Musco, Cameron and Musco, Christopher.
\newblock Recursive sampling for the {N}ystr{\"o}m method.
\newblock In \emph{Advances in Neural Information Processing Systems 30
  {(NIPS)}}, pp.\  3836--3848, 2017.

\bibitem[Reddi et~al.(2015)Reddi, P{\'{o}}czos, and Smola]{ReddiPS15}
Reddi, Sashank~J., P{\'{o}}czos, Barnab{\'{a}}s, and Smola, Alexander~J.
\newblock Communication efficient coresets for empirical loss minimization.
\newblock In \emph{Proceedings of the Thirty-First Conference on Uncertainty in
  Artificial Intelligence {(UAI)}}, pp.\  752--761, 2015.

\bibitem[Roughgarden(2017)]{Roughgarden17}
Roughgarden, Tim.
\newblock Beyond worst-case analysis, 2017.
\newblock \textit{Invited talk held at the {H}ighlights of {A}lgorithms
  conference {(HALG),} }2017.

\bibitem[Sohler \& Woodruff(2011)Sohler and Woodruff]{SohlerW11}
Sohler, Christian and Woodruff, David~P.
\newblock Subspace embeddings for the {$L_1$}-norm with applications.
\newblock In \emph{Proceedings of the 43rd {ACM} Symposium on Theory of
  Computing {(STOC)}}, pp.\  755--764, 2011.

\bibitem[Tolochinsky \& Feldman(2018)Tolochinsky and Feldman]{FeldmanT18}
Tolochinsky, Elad and Feldman, Dan.
\newblock Coresets for monotonic functions with applications to deep learning.
\newblock \emph{CoRR}, abs/1802.07382, 2018.

\bibitem[Vapnik(1995)]{Vapnik95}
Vapnik, Vladimir~N.
\newblock \emph{The Nature of Statistical Learning Theory}.
\newblock Springer, New York, USA, 1995.

\bibitem[Woodruff(2014)]{Woodruff14}
Woodruff, David~P.
\newblock Sketching as a tool for numerical linear algebra.
\newblock \emph{Foundations and Trends in Theoretical Computer Science},
  10\penalty0 (1-2):\penalty0 1--157, 2014.

\bibitem[Woodruff \& Zhang(2013)Woodruff and Zhang]{WoodruffZ13}
Woodruff, David~P. and Zhang, Qin.
\newblock Subspace embeddings and $\ell_p$-regression using exponential random
  variables.
\newblock In \emph{The 26th Conference on Learning Theory {(COLT)}}, pp.\
  546--567, 2013.

\end{thebibliography}
\bibliographystyle{icml2018}

\clearpage
\appendix
\section{Proofs}
\label{app:proofs}
\begin{proof}[Proof of Theorem \ref{thm:muLP}.]
	Let $A = D_wX$. In $O(nd\log d + \poly(d))$ time, find an $\ell_1$-well-conditioned basis \citep{ClarksonDMMMW16} $U\in\REALS^{n\times d}$ of $A$, such that \[\forall \beta\in\REALS^d\colon \onenorm{\beta} \leq \onenorm{U\beta} \leq \poly(d) \onenorm{\beta}.\] 
	Then $\mu(U)$ and $\mu(A)$ are the same since $U$ and $A$ span the same columnspace. By linearity it suffices to optimize over unit-$\ell_1$ vectors $\beta$. If we minimize $\onenorm{(U\beta)^-}$ over unit-$\ell_1$ vectors $\beta$, and $t$ is the minimum value, then $\mu$ is at most $\poly(d)/t$, and at least $1/t$ by the well-conditioned basis property, so we just need to find $t$, which can be done with the following linear program:
	\begin{align*}
	\min\quad&\sum\nolimits_{i=1}^n b_i \\
	\operatorname{s.t.}\quad&\forall i\in[n]\colon (U\beta)_i = a_i - b_i \\
	&\forall i\in[d]\colon \beta_i = c_i - d_i \\
	&\sum\nolimits_{i=1}^d c_i + d_i \geq 1\\
	&\forall i\in[n]\colon a_i, b_i \geq 0\\
	&\forall i\in[d]\colon c_i, d_i \geq 0
	\end{align*}
	Note that $\sum\nolimits_{i=1}^d c_i + d_i \geq 1$ ensures $\onenorm{\beta} \geq 1$, but to minimize the objective function, one will always have $\onenorm{\beta}$. Further, if both $a_i$ and $b_i$ are positive for some $i$, they can both be reduced, reducing the objective function. So $\sum\nolimits_{i=1}^n b_i$ exactly corresponds to the minimum over $\beta\in\REALS^d$ of $\onenorm{(U\beta)^-}$.
\end{proof}

\begin{proof}[Proof of Theorem \ref{LB:streaming}.]
	Assume we had a streaming algorithm using $o(n/\log n)$ space. We construct the following protocol for I{\footnotesize NDEX}:
	Consider an instance of I{\footnotesize NDEX}, i.e., Alice has a string $x\in\{0,1\}^n$ and Bob has an index $i\in[n]$.
	We transform the instance into an instance for logistic regression.	For each $x_j=1$, Alice adds a point $p_j=(\cos (\frac{j}{n}),\sin (\frac{j}{n}))$. Note that all of these points have unit Euclidean norm and hence any single point may be linearly separated from the others. All of Alice's points have label $1$.
	Alice summarizes the point set by running the streaming algorithm and sends a message containing the working memory of the streaming algorithm to Bob. 
	Bob now adds the point $p_i=(1-\delta)\cdot(\cos (\frac{i}{n}),\sin (\frac{i}{n}))$ for small enough $\delta>0$ with label $-1$. 
	From the contents of Alice's message and $p_i$, Bob now obtains a solution to the logistic regression instance.
	Clearly, if Alice added $p_i$ and hence $x_i=1$ then the optimal solution will have cost at least $\ln(2)$, since there will be at least one misclassification.
	If, on the other hand, Alice did not add $p_i$ and hence $x_i=0$, then the two point sets are linearly separable and the cost tends to $0$.
	Distinguishing between these two cases, i.e. approximating the cost of logistic regression beyond a factor $\lim\limits_{x\rightarrow 0}\frac{\ln(2)}{x}$ solves the I{\footnotesize NDEX} problem.
	
	To conclude the theorem, let us consider the space required to encode the points added by Alice. For the reduction to work, it is only important that any point added by Alice can be linearly separated from the others. This can be achieved by using $O(\log n)$ bits per point, i.e., the space of Alice's point set is at most $n'\in O(n\log n)$. The space bound now follows from the lower bound of $\Omega(n) \subseteq \Omega(n'/\log n)$ bits due to \cite{KNR99} for the I{\footnotesize NDEX} problem.    
\end{proof}

\begin{proof}[Proof of Corollary \ref{LB:coreset}.]
	If we had a coreset construction with $o(n/\log n)$ points, we have a protocol for I{\footnotesize NDEX}: Alice computes a coreset for her point set defined in the proof of Theorem \ref{LB:streaming} and sends it to Bob. Bob computes an optimal solution on the union of the coreset and his point. This solves I{\footnotesize NDEX} using $o(n)$ communication, which contradicts the lower bound of \cite{KNR99}. So Alice's coreset cannot exist.
\end{proof}

\begin{proof}[Proof of Lemma \ref{lem:vc1}.]{\it (cf. \cite{HugginsCB16})}
	For all $G\subseteq\fcal^c_{log}$, we have
	\begin{align*}
	&|\{ G \cap R \mid R \in \ranges(\fcal^c_{log})\}| = |\{ \rng{G}(\beta,r) \mid \beta\in\REALS^d, r\in\REALS_{\geq 0} \}|
	\end{align*}
	Note that $g$ is invertible and monotone. Also note that $g^{-1}$ maps $\REALS_{\geq 0}$ surjectively into $\REALS$. For all $\beta\in\REALS^d, r\in \REALS_{\geq 0}$ we thus have 
	\begin{align*}
	\rng{G}(\beta,r)&=\{c\cdot g_i\in G \mid c\cdot g_i(\beta) \geq r\}\\
	 &= \{c\cdot g_i\in G \mid c\cdot g(x_i\beta) \geq r\} = \{c\cdot g_i\in G \mid x_i\beta \geq g^{-1}(r/c)\}.
	\end{align*}
	Now note that $\{c\cdot g_i\in G \mid x_i\beta \geq g^{-1}(r/c) \}$ corresponds to the set of points that is shattered by the affine hyperplane classifier $x_i \mapsto \mathbf{1}_{\{x_i\beta - g^{-1}(r/c) \geq 0\}}$.
	We can conclude that
	\begin{align*}
	&\left| \{\rng{G}(\beta,r) \mid \,\beta \in \REALS^d, r\in\REALS_{\geq 0}\} \right|=\left| \{\{g_i\in G \mid x_i\beta - s \geq 0 \} \mid \beta\in\REALS^d, s\in \REALS \}  \right|
	\end{align*}
	which means that the VC dimension of $\mathfrak{R}_{\fcal^c_{log}}$ is $d+1$ since the VC dimension of the set of hyperplane classifiers is $d+1$ \citep{KearnsV94,Vapnik95}.
\end{proof}

\begin{proof}[Proof of Lemma \ref{lem:vc2}.]
	We partition the functions into $t$ disjoint classes having equal weights. Let $F_i = \{ w_j\cdot g_j\in \fcal_{log} \mid w_j=v_i \}$, for $i\in [t]$. For the sake of contradiction, suppose $\Delta(\mathfrak{R}_{\fcal_{log}}) > t\cdot (d+1)$. Then there exists a set $G$ of size $|G| > t\cdot (d+1)$ that is shattered by the ranges of $\mathfrak{R}_{\fcal_{log}}$. Now consider the sets $F_i \cap G$, for $i\in [t]$. Due to the disjointness property, each set $F_i \cap G$ must be shattered by the ranges induced by $F_i$. But at least one of them must be as large as $\frac{|G|}{t}>\frac{t\cdot (d+1)}{t}=d+1$, which contradicts Lemma \ref{lem:vc1}. Thus $\Delta(\mathfrak{R}_{\fcal_{log}}) \leq t\cdot (d+1) \in O(dt)$ follows.
\end{proof}

\begin{proof}[Proof of Lemma \ref{lem:one}.]
	Let $D_wX=UR$, where $U$ is an orthonormal basis for the columnspace of $D_wX$. It follows from $0.5 \leq x_i\beta$ and monotonicity of $g$ that
	\begin{align*}
	w_i g(x_i\beta) &= w_i \g{\frac{w_i x_i \beta}{w_i}} = w_i \g{\frac{U_i R \beta}{w_i}}  
	\leq w_i \g{\frac{\twonorm{U_i}\twonorm{R\beta}}{w_i}} = w_i \g{\frac{\twonorm{U_i}\twonorm{UR\beta}}{w_i}} \\ 
	&= w_i \g{\frac{\twonorm{U_i}\twonorm{D_wX\beta}}{w_i}} \leq w_i\frac{2}{w_i}\twonorm{U_i} \twonorm{D_wX\beta} 
	\leq 2\twonorm{U_i} \onenorm{D_wX\beta} \\
	&\leq 2\twonorm{U_i} (1+\mu)\onenorm{(D_wX\beta)^+} 
	= 2\twonorm{U_i} (1+\mu) \sum_{j:w_jx_j\beta \geq 0} w_j |x_j\beta| \\
	&\leq 2\twonorm{U_i} (1+\mu) \sum_{j:x_j\beta \geq 0} w_j g(x_j\beta) 
	\leq 2\twonorm{U_i} (1+\mu)f_w(X\beta).\qedhere 
	\end{align*}
\end{proof}

\begin{proof}[Proof of Lemma \ref{lem:two}.]
	Let $K^-= \{j\in [n]\;|\;x_j\beta \leq -2 \}$ and $K^+=\{j\in [n]\;|\;x_j\beta > -2 \}$. Note that $g(-2)>1/10$ and $g(x_i\beta) \leq g(0.5)<1$. Also, $\sum_{j\in K^-} w_j + \sum_{j\in K^+} w_j=\wcal.$
	
	Thus if $\sum_{j\in K^+} w_j \geq \frac{1}{2} \wcal$ then 
	\begin{align*}
	f_w(X\beta) &= \sum\nolimits_{i=1}^n w_j g(x_j\beta) \geq \frac{\sum_{j\in [n]} w_j}{20} \geq \frac{\wcal}{20w_i} \cdot w_i g(x_i\beta).
	\end{align*}
	
	If on the other hand $\sum_{j\in K^+} w_j < \frac{1}{2} \wcal$ then $\sum_{j\in K^-} w_j \geq \frac{1}{2} \wcal$. Thus 
	\begin{align*}
	f_w & (X\beta) \geq \onenorm{(D_wX\beta)^+} \geq {\onenorm{(D_wX\beta)^-}}/{\mu} \geq  \left(2\cdot \frac{\sum_{j\in [n]} w_j}{2}\right)\Big/{\mu} \geq \frac{\wcal}{\mu w_i} \cdot w_i g(x_i\beta). \qedhere
	\end{align*}
\end{proof}

\begin{proof}[Proof of Lemma \ref{lem:totalsensitivity}.]
	From Lemma \ref{lem:one} and Lemma \ref{lem:two} we have for each $i$
	\begin{align*}
	\varsigma_i = \sup_\beta \frac{w_i g(x_i\beta)}{f_w(X\beta)}
	&\leq 2(1+\mu)\twonorm{U_i} + (20+\mu)\frac{w_i}{\wcal} \leq (20+2\mu) \Big( \twonorm{U_i} + \frac{w_i}{\wcal} \Big)
	\end{align*}
	From this, the second claim follows via the Cauchy-Schwarz inequality and using the fact that the Frobenius norm satisfies $\Fnorm{U}=\sqrt{\sum\nolimits_{i\in[n],j\in[d]} |U_{ij}|^2}=\sqrt{d}$ due to orthonormality of $U$. We have
	\begin{align*}
	\mathfrak{S} = \sum\nolimits_{i=1}^n \varsigma_i
	&\leq (20+2\mu) \sum\nolimits_{i=1}^n \Big( \twonorm{U_i} + \frac{w_i}{\wcal} \Big) \leq 22\mu (\sqrt{n}\Fnorm{U} + 1) \leq 44\mu \sqrt{nd}\,.\qedhere
	\end{align*}
\end{proof}

\begin{proof}[Proof of Theorem \ref{thm:basicalg}.]\label{thm:basicalg:proof}
	The algorithm computes the QR-decomposition $D_wX=QR$ of $D_wX$. Note that $Q$ is an orthonormal basis for the columnspace of $D_wX$. We would like to use the upper bounds on the sensitivities from Lemma \ref{lem:totalsensitivity}. Namely, to sample the input points proportional to the sampling probabilities $\frac{s_i}{\sum\nolimits_{j=1}^n s_j} = \frac{\twonorm{Q_i}+w_i/\wcal}{\sum\nolimits_{j=1}^n (\twonorm{Q_j}+w_j/\wcal)}.$ However, to keep control of the VC dimension of the involved range space, we modify them to obtain upper bounds $s_i'$ such that each value ${s_i'}/{w_i}$ corresponds to ${s_i}/{w_i}$ but is rounded up to the closest power of two. It thus holds $s_i \leq s'_i\leq 2 s_i$ for all $i\in [n]$. The input points are sampled proportional to the sampling probabilities $p_i = {s'_i}/{\sum\nolimits_{j=1}^n s'_j}.$ From Lemma \ref{lem:totalsensitivity} we know that $S'=\sum\nolimits_{j=1}^n s'_j \leq 2 S \in O(\mu\sqrt{nd})$.
	
	In the proof of Theorem \ref{thm:sensitivity}, the VC dimension bound is applied to a set of functions which are reweighted by $\frac{S'w_i}{s'_ik}$. We denote this set of functions $\fcal_{log}$. Now note that the sensitivities satisfy
	\begin{align}
	\label{eqn:sensitivitypowers}
	\frac{2}{w_{\min}} \geq \frac{2}{w_i} \geq \frac{s_i'}{w_i} \geq \frac{s_i}{w_i} \geq \sup_\beta \frac{g(x_i\beta)}{\sum\nolimits_{j=1}^{n} w_j g(x_j\beta)} \overset{\beta = 0}{\geq} \frac{1}{\sum\nolimits_{j=1}^{n} w_j} \geq \frac{1}{{n} w_{\max}} \,.
	\end{align}
	Also note that $k$ and $S'$ are fixed values. Since the values ${s_i'}/{w_i}$ are scaled to powers of two, by (\ref{eqn:sensitivitypowers}) there can be at most $O(\log \frac{nw_{\max}}{w_{\min}})\subseteq O(\log (\omega n))$ distinct values of $\frac{S'w_i}{s'_ik}$. Putting this into Lemma \ref{lem:vc2}, we have $\Delta(\mathfrak{R}_{\fcal_{log}}) \in O(d\log (\omega n))$.
	
	Putting all these pieces into Theorem \ref{thm:sensitivity} for error parameter $\eps \in (0,1/2)$ and failure probability $\eta = n^{-c}$, we have that a reweighted random sample of size
	\begin{align*}
	k &\in O\left(\frac{S'}{\eps^2}\left( \Delta(\mathfrak{R}_{\fcal_{log}}) \log S' + \log \left(\frac{1}{\eta}\right) \right) \right) \\
	&\subseteq O\left(\frac{\mu\sqrt{nd}}{\eps^2}\left( d \log(\mu\sqrt{nd}) \log (\omega n) + \log \left(n^c\right) \right) \right)\\
	&\subseteq O\left(\frac{\mu\sqrt{n}}{\eps^2} d^{3/2} \log(\mu nd)\log (\omega n) \right)
	\end{align*}
	is a $(1\pm\eps)$ coreset with probability $1-1/n^c$ as claimed.	
	
	It remains to prove the claims regarding streaming and running time. We can compute the QR-decomposition of $D_wX$ in time $O(nd^2)$, see \citep{GolubL96}. Once $Q$ is available, we can inspect it row-by-row computing $\twonorm{Q_i}+w_i/\wcal$ and give it as input together with $x_i$ to $k$ independent copies of a weighted reservoir sampler \citep{Chao82}, which takes $O(\nnz(X))$ time to collect all sampled non-zero entries. This gives a total running time of $O(nd^2)$ since the computations are dominated by the QR-decomposition.
	
	We argue how to implement the first step in one streaming pass over the data in time $O(\nnz(X)\log n + \poly(d))$. Using the sketching techniques of \cite{ClarksonW13}, cf. \citep{Woodruff14}, we can obtain a provably constant approximation of the square root of the leverage scores $\twonorm{Q_i}$ with constant probability \citep{DrineasMMW12}. This means that the total sensitivity bound $S$ grows only by a small constant factor and does not affect the asymptotic analysis presented above. The idea is to first sketch the data matrix $X\in\REALS^{n\times d}$ to a significantly smaller matrix $\tilde{X}\in\REALS^{n'\times d}$, where $n'\in O(d^2)$. This takes only $O(\nnz(X)\log n + \poly(d)\log n)$ time, where the $\poly(d)$ and $\log n$ factors are only needed to amplify the success probability from constant to $\frac{1}{n^c}$ \citep{Woodruff14}. Performing the QR-decomposition $\tilde{X}=\tilde{Q}\tilde{R}$ takes $O(n'd^2)\subseteq O(d^4)$ time.
	
	Now, to compute a fast approximation to the row norms, we use a Johnson-Lindenstrauss transform, i.e., a matrix $G\in\REALS^{d\times m}, m\in O(\log n)$, whose entries are i.i.d. $G_{ij} \sim N(0,\frac{1}{m})$ \citep{JohnsonL84}. We compute the approximation to the row norms used in our sampling probabilities in a second pass over the data, as $\twonorm{\tilde{U_i}}=\twonorm{ X_i(\tilde{R}^{-1}G) }$, for $i\in [n]$. As we do so, we can feed these augmented with the corresponding weight directly to the reservoir sampler. The latter is a streaming algorithm itself and updates its sample in constant time. The matrix product $\tilde{R}^{-1}G$ takes at most $O(d^2\log n)$ time, and the streaming pass can be done in $O(\nnz(X)\log n)$.
	
	This sums up to two passes over the data and a running time of $O(\nnz(X)\log n+ \poly(d)\log n)$.
\end{proof}

\begin{proof}[Proof of Lemma \ref{lem:vc_ell1}.]
	Fix an arbitrary $G\subseteq \fcal_{\ell_1}$. Let $\Omega=\REALS^d\times\REALS_{\geq0}$. We attempt to bound the quantity
	\begin{align}
	|\{ G \cap R \mid & \,R \in \ranges(\fcal_{\ell_1})\}| \notag\\
	&=| \{\rng{G}(\beta,r)\mid \beta\in \REALS^d, r\in\REALS_{\geq 0}\} | \notag \\\notag
	&= | \bigcup\limits_{(\beta,r)\in\Omega}\{\{h_i\in G \mid h_i(\beta) \geq r \}\} | \\\notag
	&= | \bigcup\limits_{(\beta,r)\in\Omega} \{\{h_i\in G \mid w_i x_i\beta \geq r \vee - w_i x_i\beta \geq r\} \} | \\\notag
	&\leq \left|  \bigcup\limits_{(\beta,r)\in\Omega} \{\{h_i\in G \mid w_i x_i\beta \geq r \}\}  \right|  \cdot \left| \bigcup\limits_{(\beta,r)\in\Omega} \{\{h_i\in G \mid -w_i x_i\beta \geq r \}\} \right| \\
	&= \left|  \bigcup\limits_{(\beta,r)\in\Omega} \{\{h_i\in G \mid w_i x_i\beta \geq r \}\}  \right|^2. \label{eqn:rngspace}
	\end{align}
	The inequality holds, since each non-empty set in the collection on the LHS satisfies either of the conditions of the sets in the collections on the RHS, or both, and is thus the union of two of those sets, one from each collection. It can thus comprise at most all unions obtained from combining any two of these sets.
	The last equality holds since for each fixed $\beta$ we also union over $-\beta$ as we reach over all $\beta\in\REALS^d$. The two sets are thus equal.
	
	Now note that each set $\{h_i\in G \mid w_i x_i\beta \geq r \}$ equals the set of weighted points that is shattered by the affine hyperplane classifier $w_ix_i \mapsto \mathbf{1}_{\{w_ix_i\beta - r \geq 0\}}$. Note that the VC dimension of the set of hyperplane classifiers is $d+1$ \citep{KearnsV94,Vapnik95}. To conclude the claimed bound on $\Delta(\mathfrak{R}_{\fcal_{\ell_1}})$ it is sufficient to show that the above term (\ref{eqn:rngspace}) is bounded strictly below $2^{|G|}$ for $|G|=10(d+1)$. By a bound given in \citep{BlumerEHW89,KearnsV94} we have for this particular choice
	\begin{align*}
	(\ref{eqn:rngspace}) &\leq \left| \{\{h_i\in G \mid w_i x_i\beta - r \geq 0 \} \mid \beta\in\REALS^d, r\in \REALS \}  \right|^2 \leq \left(\frac{e|G|}{d+1}\right)^{2(d+1)} \\
	&< 2^{2(d+1)\log(30)} \leq 2^{2(d+1)5} = 2^{|G|}
	\end{align*}
	which implies that $\Delta(\mathfrak{R}_{\fcal_{\ell_1}}) < 10(d+1)$.
\end{proof}

\begin{proof}[Proof of Lemma \ref{lem:l1_subspace_embedding}.]
	Consider any $\beta\in\REALS^d$. Let $D_wX=UR$ where $U$ is an orthonormal basis for the columnspace of $D_wX$. As in Lemma \ref{lem:one} we have for each index $i$
	\begin{align}
	\label{eqn:l1bound}
	|w_ix_i\beta| &= |U_i R\beta| \leq \twonorm{U_i} \twonorm{R\beta}=\twonorm{U_i} \twonorm{D_wX\beta} \leq \twonorm{U_i} \onenorm{D_wX\beta}
	\end{align}
	The sensitivity for the $\ell_1$ norm function of $x_i\beta$ is thus $$\sup_{\beta\in\REALS^d\setminus\{0\}}\frac{\!\!\!w_i|x_i\beta|}{\onenorm{D_wX\beta}}\leq \twonorm{U_i}.$$
	Note that our upper bounds on the sensitivities satisfy $s_i\geq \twonorm{U_i}$. Thus also $S=\sum\nolimits_{i=1}^n s_i \geq \sum\nolimits_{i=1}^n \twonorm{U_i}$ holds. In particular, these values are exceeded by more than a factor of $\mu+1$.
	Also, by Lemma \ref{lem:vc_ell1}, we have a bound of $O(d)$ on the VC dimension of the class of functions $\fcal_{\ell_1}$.
	Now, rescaling the error probability parameter $\delta$ that we put into Theorem \ref{thm:sensitivity} by a factor of $\frac 1 2$, and union bound over the two sets of functions $\fcal_{log}$, and $\fcal_{\ell_1}$, the sample in Theorem \ref{thm:basicalg} satisfies at the same time the claims of Theorem \ref{thm:basicalg} with parameter $\eps$ and of this lemma with parameter $\eps' \leq \eps/\sqrt{\mu + 1}$ by folding the additional factor of $\mu+1$ into $\eps$.
\end{proof}

\begin{proof}[Proof of Lemma \ref{lem:muerror}.]
	For brevity of presentation let $X'=D_wX$. First note that combining the choice of the parameter $\eps/\sqrt{\mu+1}$ with Lemma \ref{lem:l1_subspace_embedding} we have for all $\beta\in\REALS^d$ $$\left(1-{\eps'}\right)\onenorm{X'\beta}\leq\onenorm{TX'\beta}\leq \left(1+{\eps'}\right)\onenorm{X'\beta},$$ where $\eps'\leq \frac{\eps}{\mu+1}$. Note that since the weights are non-negative, sampling and reweighting does not change the sign of the entries. This implies for $\eta^+ = |\onenorm{(TX'\beta)^+}-\onenorm{(X'\beta)^+}|$ and $\eta^- = |\onenorm{(TX'\beta)^-}-\onenorm{(X'\beta)^-}|$ that $\max\{\eta^+,\,\eta^-\}\leq \eta^+ +\eta^- = |\onenorm{TX'\beta}-\onenorm{X'\beta}|\leq \eps'\onenorm{X'\beta}.$
	
	From this and $\onenorm{X'\beta}=\onenorm{(X'\beta)^+} + \onenorm{(X'\beta)^-}\leq (\mu+1) \min\{\onenorm{(X'\beta)^+},\onenorm{(X'\beta)^-}\}$ it follows for any $\beta\in\REALS^d$
	\begin{align*}
	\frac{\onenorm{(TX'\beta)^+}}{\onenorm{(TX'\beta)^-}} &\leq \frac{\onenorm{(X'\beta)^+}+\eps'\onenorm{X'\beta} }{\onenorm{(X'\beta)^-}-\eps'\onenorm{X'\beta}} \leq \frac{\onenorm{(X'\beta)^+}+\eps'(\mu+1)\onenorm{(X'\beta)^+} }{\onenorm{(X'\beta)^-}-\eps'(\mu+1)\onenorm{(X'\beta)^-}}\\
	&\leq \frac{\onenorm{(X'\beta)^+}(1+\eps) }{\onenorm{(X'\beta)^-}(1-\eps)} \leq \mu\, \frac{1+\eps}{1-\eps} \leq (1+4\eps) \mu\,.
	\end{align*}
	The claim follows by folding the constant $\frac{1}{4}$ into $\eps$.
\end{proof}

\begin{proof}[Proof of Theorem \ref{thm:recalg}.]
	Recall, due to Lemma \ref{lem:muerror}, the $\mu'$-complexity at the $i$-th recursion level is upper bounded by $\mu(1+\eps)^i$. We thus apply Theorem \ref{thm:basicalg} recursively $l=\log\log n$ times with parameter $\eps_i = \frac{\eps}{2l\sqrt{\mu+1}(1+\eps)^{i}}$ for $i\in\{0\ldots l-1\}$. 
	First we bound the approximation ratio, which is the product of the single stages. We have
	\begin{align*}
	\prod\limits_{i=0}^{l-1}(1+\eps_i)
	&\leq \prod\limits_{i=0}^{l-1}\left(1+\frac{\eps}{(1+\eps)^i2l\sqrt{\mu}}\right) \leq \left(1+\frac{\eps}{2l\sqrt{\mu}}\right)^l \leq \exp\left( \frac{\eps}{2\sqrt{\mu}} \right) \leq 1+\frac{\eps}{\sqrt{\mu}}.
	\end{align*}
	Also 
	\begin{align*}
	\prod\limits_{i=0}^{l-1}(1-\eps_i)
	&\geq \prod\limits_{i=0}^{l-1}\left(1-\frac{\eps}{(1+\eps)^i2l\sqrt{\mu}}\right) \geq \prod\limits_{i=0}^{l-1}\left(1-\frac{\eps}{2l\sqrt{\mu}}\right)
	\geq 1-\sum\limits_{i=0}^{l-1}\frac{\eps}{2l\sqrt{\mu}} \geq 1-\frac{\eps}{2\sqrt{\mu}}.
	\end{align*}
	
	Initially all weights are equal to one. So in the first application of Theorem \ref{thm:basicalg} we have $\omega = 1$. This value might grow as the weights are reassigned. However, from Inequality (\ref{eqn:sensitivitypowers}) and the discussion below it follows, that the value of $\omega$ can grow only by a factor of $2n$ in each recursive iteration. So it remains bounded by $\omega \leq (2n)^{\log\log n}$ in all levels of our recursion. Its contribution to the lower order terms given in Theorem \ref{thm:basicalg} is thus bounded by $O\left(\log ((2n)^{1+\log\log n})) \right)\subseteq O\left(\log n \log \log n \right).$
	
	The size of the data set at recursion level $i+1$ satisfies 
	\begin{align*}
	n_{i+1} &\leq \sqrt{n_{i}}\cdot \frac{Cl^2(1+\eps)^{2i}\mu^2}{\eps^2} d^{3/2}\log((1+\eps)^{i}\mu n d) \log n \log \log n\\ &\leq \sqrt{n_{i}}\cdot \frac{Cl^2 4^{i}\mu^2}{\eps^2} d^{3/2}\log(2^i \mu n d) \log n \log \log n
	\end{align*}
	for some constant $C>1$. Solving the recursion until we reach $n_0=n$ we get the following bound on $n_l$. We use that for our choice $l=\log\log n$ we have $2^l = \log n$ and $n^{2^{-l}}=2^{\frac{\log n}{2^l}} = 2$.
	\begin{align*}
	n_l&\leq n^{2^{-l}}\prod\limits_{i=0}^l \left( C \cdot \frac{l^2 4^i \mu^2}{\eps^2} d^{3/2}\log(2^i\mu n d) \log n \log \log n \right)^{\frac{1}{2^{i}}}\\
	&\leq 2\prod\limits_{i=0}^l 4^{\frac{i}{2^{i}}} \prod\limits_{i=0}^l
	\left( C\cdot\frac{l^2 \mu^2}{\eps^2} d^{3/2}\log(2^l\mu n d) \log n \log \log n \right)^{\frac{1}{2^{i}}}\\
	&\leq 2\prod\limits_{i=0}^l 4^{\frac{i}{2^{i}}} \prod\limits_{i=0}^l
	\left( 2C\cdot \frac{l^2 \mu^2}{\eps^2} d^{3/2}\log(\mu n d) \log n \log \log n \right)^{\frac{1}{2^{i}}}\\
	&\leq 2\cdot 4^{\sum\limits_{i=0}^l \frac{i}{2^{i}}} \left( 2C\cdot \frac{l^2 \mu^2}{\eps^2} d^{3/2}\log(\mu n d) \log n \log \log n \right)^{\sum\limits_{i=0}^l \frac{1}{2^{i}}}\\
	&\leq 2\cdot 4^{2} \left( 2C\cdot \frac{l^2\mu^2}{\eps^2} d^{3/2}\log(\mu n d) \log n \log \log n \right)^{2}\\
	&\leq 2\cdot 16\cdot 4C^2 \cdot \frac{l^4 \mu^4}{\eps^4} d^{3}\log^2(\mu n d) \log^2 n \, (\log \log n)^2
	\end{align*}
	We conclude that for some constant $C'>C$
	\begin{align*}
	n_l &\leq C'\cdot\frac{l^4\mu^4}{\eps^4} d^{3}\log^2(\mu n d) \log^2 n  \, (\log \log n)^2 \leq C'\cdot\frac{\mu^4}{\eps^4} d^{3}\log^2(\mu n d)\log^2 n \, (\log \log n)^6.
	\end{align*}
	To reduce this even further, note that in the final iteration we do not need to preserve the $\mu$-complexity. We can thus apply Theorem \ref{thm:basicalg} with the original approximation parameter $\eps$ to obtain a coreset as claimed of size
	\begin{align*}
	k &\in O\left( \sqrt{n_l} \cdot \frac{\mu}{\eps^2} d^{3/2}\log(\mu n d)\log(\omega n) \right)\\
	&\subseteq O\left( \frac{\mu^2}{\eps^2} d^{3/2}\log(\mu n d)\log n \, (\log \log n)^3 \cdot \frac{\mu}{\eps^2} d^{3/2}\log(\mu n d)\log n \, (\log\log n) \right)\\
	&\subseteq O\left( \frac{\mu^3}{\eps^4} d^{3}\log^2(\mu n d)\log^2 n \, (\log \log n)^4 \right) .
	\end{align*}
	
	It remains to bound the failure probability. Note that we use a $\log n$ factor in the sampling sizes at all stages rather than $\log n_i$. The failure probability at each stage is thus bounded by $\frac{1}{n^{c'}}$ for $c' = c+1 > 2$ by adjusting constants. We can thus take a union bound over the stages to get an error probability of at most $$l\cdot \frac{1}{n^{c'}}=\frac{\log\log n }{n^{c'}}\leq\frac{1}{n^{c'-1}} \leq \frac{1}{n^{c}}.$$

	Now recall from Theorem \ref{thm:basicalg} the two pass streaming algorithm whose running time was dominated by $O(\nnz(X)\log n_i+ \poly(d)\log n_i)$. We can thus bound the running time of the recursive algorithm for sufficiently large $C>1$ by
	\begin{align*}
		C (\nnz(X)+\poly(d)) \sum\nolimits_{i=0}^{l-1} \log n_i &\leq C (\nnz(X)+\poly(d)) \log n \log\log n \\
		&\in O((\nnz(X)+\poly(d))\log n\log\log n).
	\end{align*}
	Regarding the number of passes, note that for any $\eta>0$, after $\log(\frac{1}{\eta})$ recursion steps, the leading term in the size of the coreset is as low as $n^{2^{-\log\frac{1}{\eta}}}=n^\eta$, after which we may arguably assume, that the coreset fits into memory. The algorithm thus takes $2\log(\frac{1}{\eta})$ streaming passes over the data before it turns to an internal memory algorithm.
\end{proof}

\clearpage
\section{Material for the experimental section}
\label{app:experiments}
\begin{table}[ht!]
	\caption{Absolute values of the negative log-likelihood $\mathcal L(\beta_{opt})$ at the optimal value $\beta_{opt}$ and mean running time $t_{opt}$ in seconds from the optimization task on the full data sets.}
	\label{tbl:abs_vals}
	\vskip 0.15in
	\begin{center}
		\begin{small}
			\begin{sc}
				\begin{tabular}{lrr}
					\toprule
					Data set &{ $\mathcal L(\beta_{opt})$ }& $t_{opt}$ \\
					\midrule
					Webb Spam   &    69,534.49 & 1,051.72 \\
					Covertype   &   270,585.34 &   218.22 \\
					KDD Cup '99 &   301,023.24 &   136.06 \\
					\bottomrule
				\end{tabular}
			\end{sc}
		\end{small}
	\end{center}
	\vskip -0.1in
\end{table}

\begin{figure*}[ht!]
	\vskip 0.2in
	\begin{center}
		\begin{sc}
			\begin{tabular}{ccc}
				{\small\hspace{.5cm}Webb Spam}&{\small\hspace{.5cm}Covertype}&{\small\hspace{.5cm}KDD Cup '99} \\
				\includegraphics[width=0.30\linewidth]{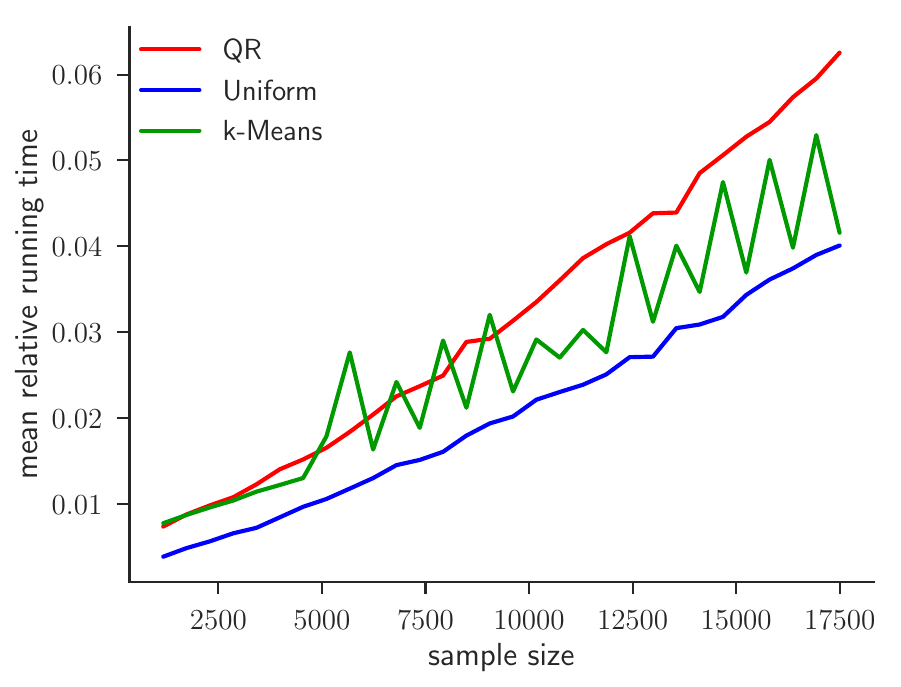}&
				\includegraphics[width=0.30\linewidth]{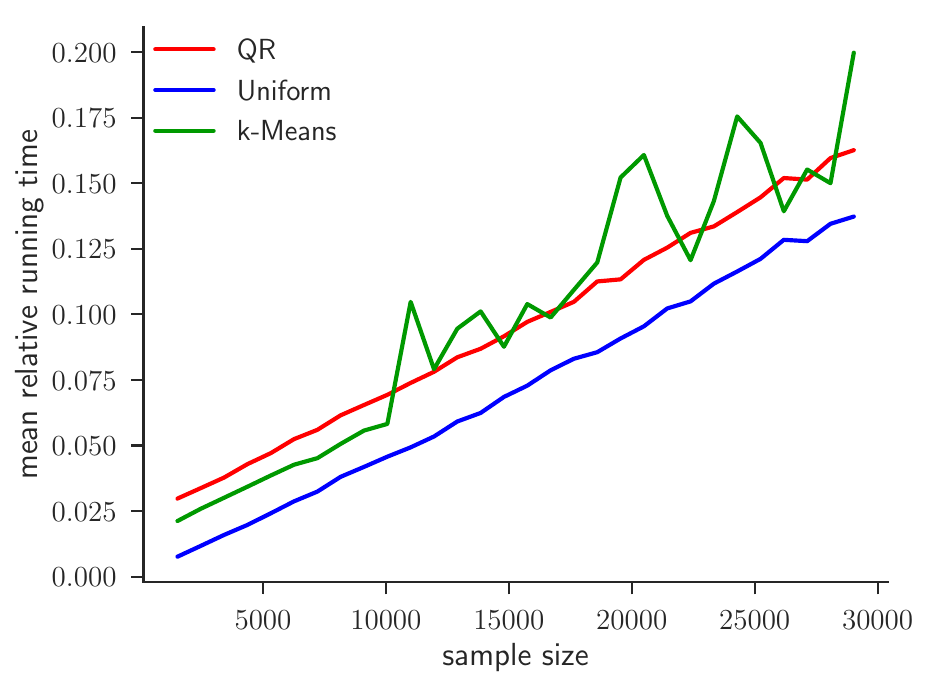}&
				\includegraphics[width=0.30\linewidth]{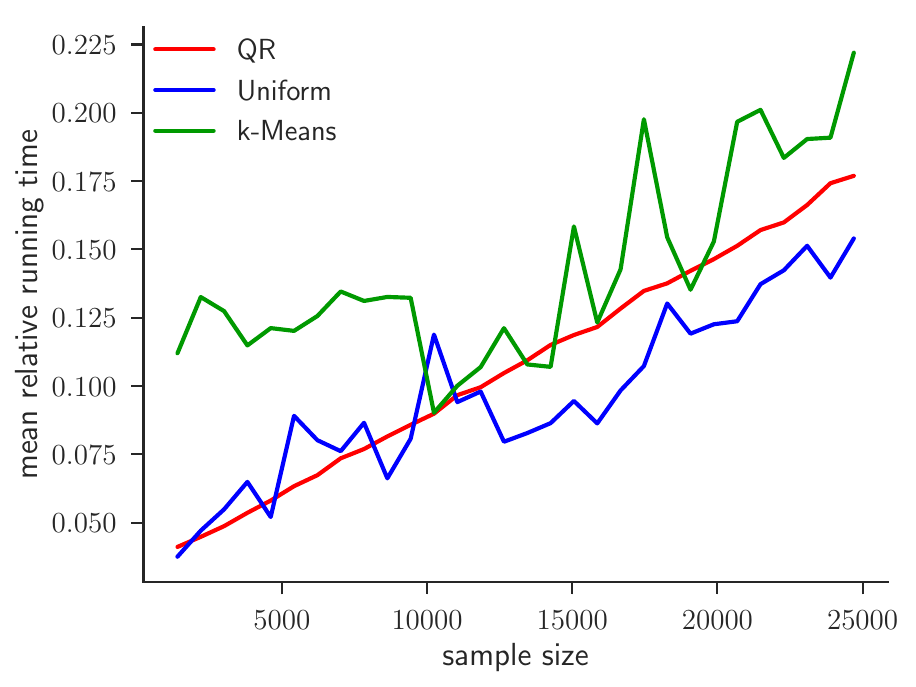} \\
				\includegraphics[width=0.30\linewidth]{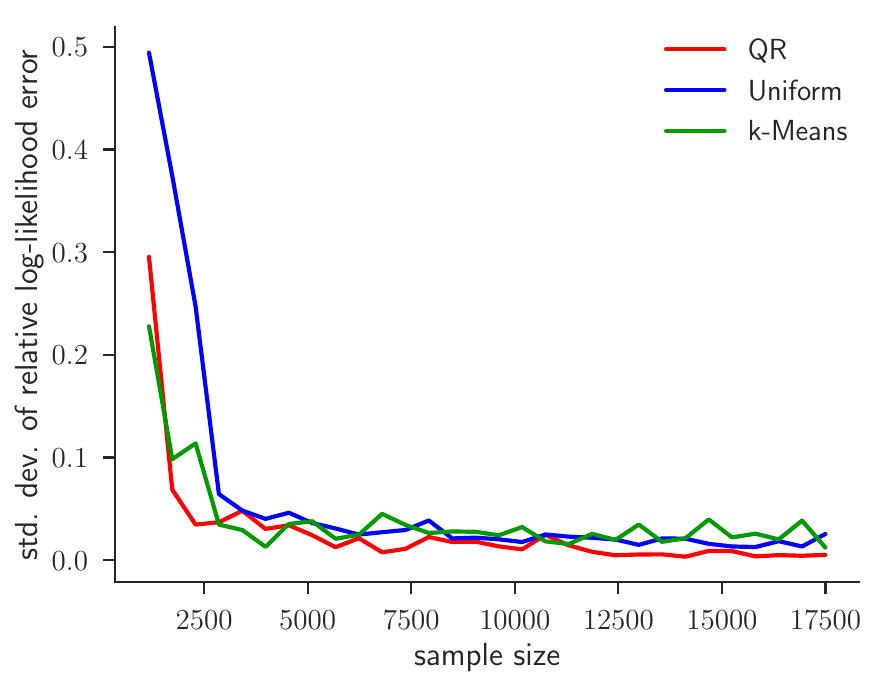}&
				\includegraphics[width=0.30\linewidth]{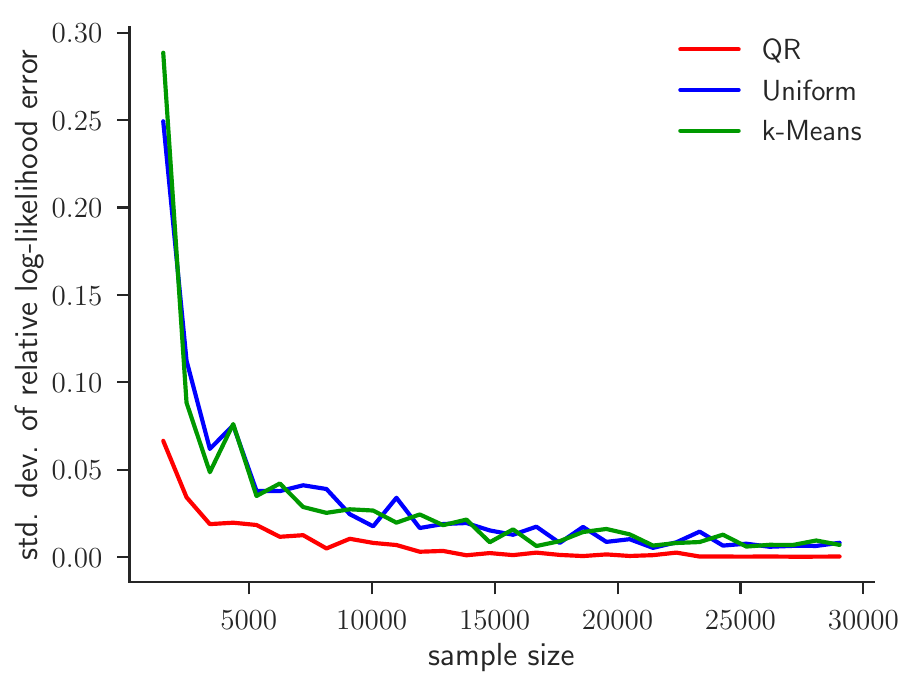}&
				\includegraphics[width=0.30\linewidth]{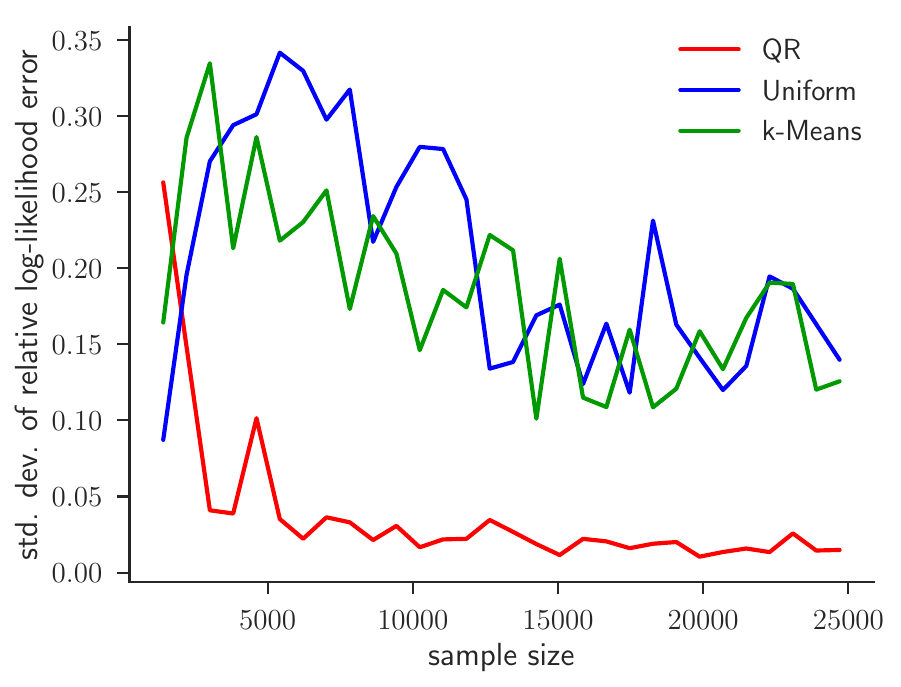} \\ 
				\includegraphics[width=0.30\linewidth]{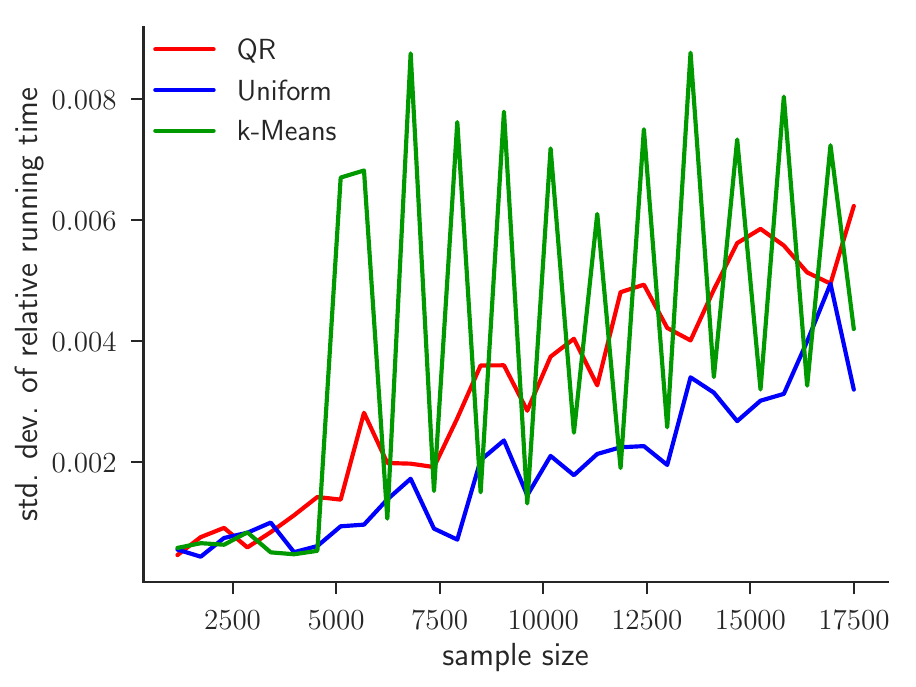}&
				\includegraphics[width=0.30\linewidth]{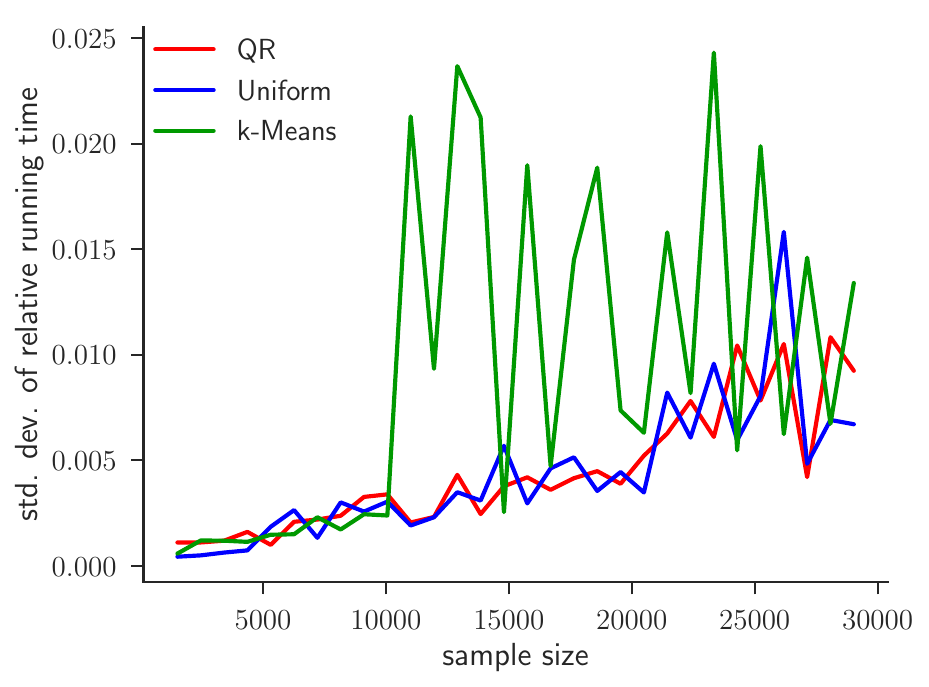}&
				\includegraphics[width=0.30\linewidth]{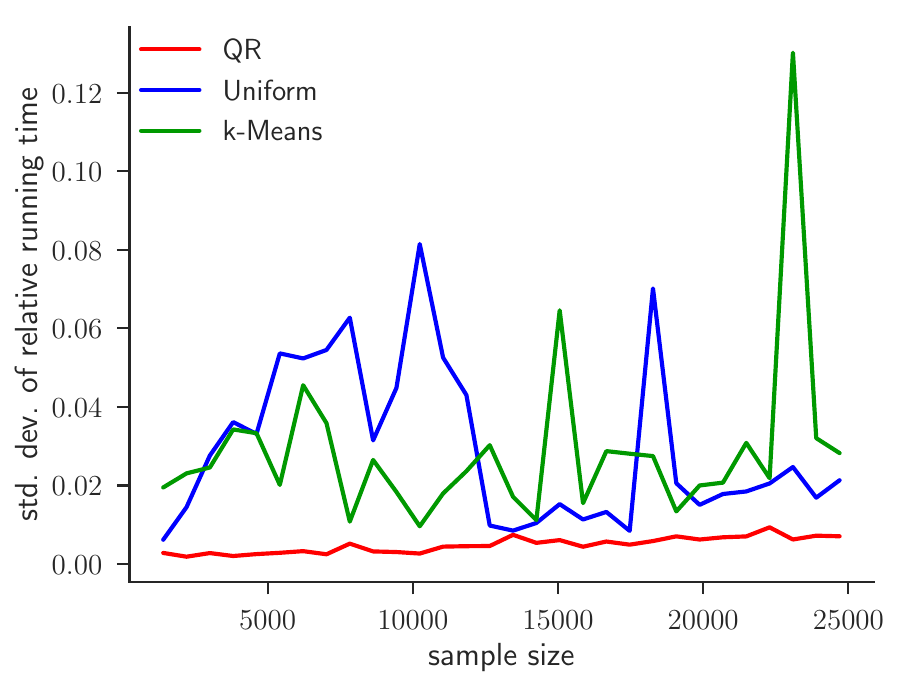} \\
			\end{tabular}
		\end{sc}
		\caption{Each column shows the results for one data set comprising thirty different coreset sizes (depending on the individual size of the data sets). The plotted values are means and standard deviations taken over twenty independent repetitions of each experiment. The plots show the mean relative running times (upper row), the standard deviations of the relative log-likelihood errors (middle row) and standard deviations of the relative running times (lower row) of the three subsampling distributions, uniform sampling (blue), our QR derived distribution (red), and the $k$-means based distribution (green). All values are relative to the corresponding running times respectively optimal log-likelihood values of the optimization task on the full data set, see Table \ref{tbl:abs_vals} (lower is better).}
		\label{fig:supp:variances}
	\end{center}
	\vskip -0.2in
\end{figure*}

\clearpage
\section{Discussion of uniform sampling}
\label{dis:uniform}
As we have discussed in the lower bounds section \ref{sec:LBs}, uniform sampling cannot help to build coresets of sublinear size for worst case instances. Actually this also holds for other techniques for solving logistic regression that rely on uniform subsampling, such as stochastic gradient descent (SGD).

We support this claim via a little experiment and some theoretical discussion on the following data set $X$ of size $m=2n+2$ in one dimension (plus intercept): The $-1$ class consists of one point at $-n$ and $n$ points at $1$, while class $+1$ consists of one point at $+n$ and $n$ points at $-1$. By symmetry of $\ell_1$-norms, it is straightforward to check that the data is $\mu$-complex for $\mu=1$, and the optimal solution is $\hat\beta=0$, which corresponds to $f(X\hat\beta)=\sum\nolimits_{i=1}^{m} \ln(1+\exp(0))=m\ln(2) < m$. Our algorithms will thus find a coreset of sublinear size such that the optimal solution has a value of at most $(1+\eps) m$ with high probability.

A uniform sample of sublinear size misses the two points at $-n$ and $n$, since the probability to sample one of these is $\frac{1}{n+1}$. However, finding these points is crucial, since otherwise the remaining data is separable, which leads to a large $\beta$. Adding penalization is not a remedy. Figure \ref{fig:unif} shows the results of running $1\,000$ independent repetitions of \texttt{sklearn.linear\_model.SGDClassifier} in Python for logistic regression, with $\ell^2_2$-penalty enabled, on the data set with $n=50\,000$. The boxplots show the resulting coefficients for the intercept ${\beta}_0$ and for the single dimension ${\beta}_1$. One might argue that the intercept term is close to $\hat\beta_0=0$, but for $\beta_1$, half of the values lie above the median of $107.06$ (red line) and still a quarter lies even above the upper quartile of $492.35$ (upper boundary of the box).

Note that assuming $\beta_0 = 0$ and $\beta_1\gg (1+\eps)$, we have $f(X\beta)>2n\beta_1\gg (1+\eps)m$, since by construction
\begin{align*}
	f(X\beta) &=\sum\nolimits_{i=1}^{m} \ln(1+\exp(-y_i(\beta_0+x_i\beta_1))) \\
	&\geq 2 n\cdot \ln(1+\exp(\beta_0+\beta_1)) \\
	&\geq 2 n\beta_1.
\end{align*}
This implies that the approximation ratio is $\frac{f(X\beta)}{f(X\hat\beta)} \geq \frac{2n\beta_1}{m}=\frac{2n\beta_1}{2n+2}\overset{n\rightarrow \infty}{\longrightarrow} \beta_1$, which turned out very large in the experiment above, cf. Figure \ref{fig:unif}.

\begin{figure}[ht!]
	\centering
	\includegraphics[width=.6\textwidth]{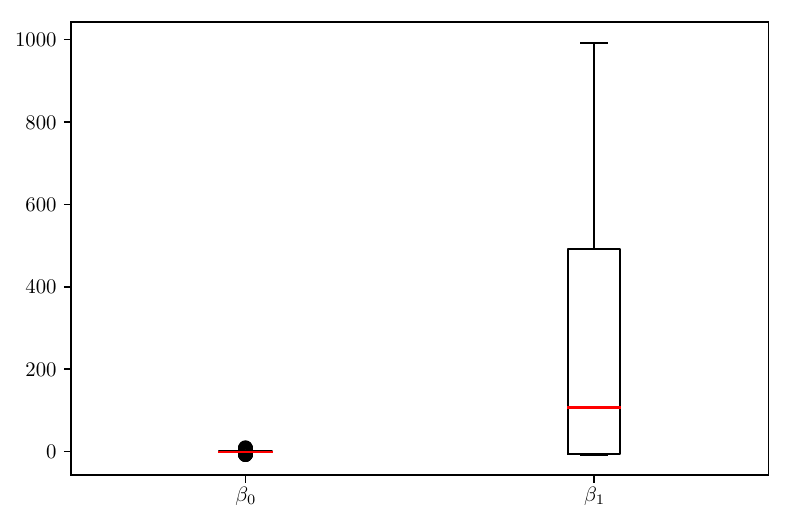}
	\caption{Boxplots of the solutions $\beta=(\beta_0,\beta_1)$ for logistic regression found by SGD in $1\,000$ independent runs on the considered data set $X$. The optimal solution is $\hat\beta=(0,0)$. It can be seen that while the intercept term $\beta_0$ is reasonably close to $0$, the majority of runs result in considerably large values of $\beta_1$, which leads to a bad approximation ratio.}
	\label{fig:unif}
\end{figure}
\end{document}